\newtheorem{Proposition}
{\textbf{Proposition}}
\newtheorem{Lemma}{\textbf{Lemma}}
\newtheorem{proof}{\textbf{Proof}}
\newtheorem{definition}{Definition}
\def\BibTeX{{\rm B\kern-.05em{\sc i\kern-.025em b}\kern-.08em   T\kern-.1667em\lower.7ex\hbox{E}\kern-.125emX}}
\newcommand{\linebreakand}{%
  \end{@IEEEauthorhalign}
  \mbox{}\par
  \mbox{}
  \begin{@IEEEauthorhalign}
}
\begin{document}

\title{Guaranteeing and Explaining Stability across Heterogeneous Load Balancing using Calculus Network Dynamics

\thanks{The authors are with the Faculty of Engineering and Applied Sciences, Cranfield University, United Kingdom. The work is supported by EPSRC CHEDDAR: Communications Hub for Empowering Distributed clouD computing Applications and Research (EP/X040518/1) (EP/Y037421/1).
}
}

\author{
\IEEEauthorblockN{Mengbang Zou}
\IEEEauthorblockA{m.zou@cranfield.ac.uk} \\
\and
\IEEEauthorblockN{Yun Tang}
\IEEEauthorblockA{yun.tang@cranfield.ac.uk}
\and\IEEEauthorblockN{Adolfo Perrusqu\'ia}
\IEEEauthorblockA{adolfo.perrusquia-guzman@cranfield.ac.uk}
\and
\IEEEauthorblockN{Weisi Guo}
\IEEEauthorblockA{weisi.guo@cranfield.ac.uk} \\
}

\maketitle

\begin{abstract}

Load balancing between base stations (BSs) allows BS capacity to be efficiently utilised and avoid outages. Currently, data-driven mechanisms strive to balance inter-BS load and reduce unnecessary handovers. The challenge is that over a large number of BSs, networks observe an oscillatory effect of load evolution that causes high inter-BS messaging. Without a calculus function that integrates network topology to describe the evolution of load states, current data-driven algorithms cannot explain the oscillation phenomenon observed in load states, nor can they provide theoretical guarantees on the stability of the ideal synchronised state. Whilst we know load state oscillation is coupled with the load balancing process algorithms and the topology structure of inter-BS boundary relations, we do not have a theoretical framework to prove this and a pathway to improving load balancing algorithms. Here, we abstract generic and heterogeneous data-driven algorithms into a calculus dynamics space, so that we can establish the synchronization conditions for networked load balancing dynamics with any network topology. By incorporating what is known as “non-conservative error” and the eigenvalue spectrum of the networked dynamics, we can adjust the inter-BS load balancing mechanisms to achieve high efficiency and convergence guarantee, or to mitigate the oscillation when the synchronisation condition can not be satisfied.

\end{abstract}

\begin{IEEEkeywords}
load balance, complex network, stability, synchronize, wireless network
\end{IEEEkeywords}

\section{Introduction}
Load balancing is a critical function of national infrastructure, ensuring that network capacity is efficiently utilised and that spike demands are met by sharing the load. In telecommunication networks, this is primarily achieved by handing users off to neighbouring BSs with extra capacity, whilst minimising handover ping-pong effects due to the large number of mobile users and dynamics of traffic demand. In modern 5G and future 6G networks, the load balancing is not only in data traffic flows served under the BS wireless capacity but also edge computation requirements that may exceed BS edge computing capacity.
To meet the exponential increase of traffic data and edge computing demand, base stations are densely deployed, increasing the network’s capacity and throughput by reducing the distance to users \cite{li2018ultra}. 
However, unbalanced load distribution across the network can cause the under-utilization of resources where some BSs are heavily congested while others are idle, wasting spectrum and power resource. As a result, unbalanced load distribution can severely undermine the advantages of network with dense BSs. Load balancing aims to distribute user traffic evenly across base stations (BSs), preventing some BSs from being overloaded while others are underloaded, which is beneficial for the resource utilization and energy efficiency of wireless network as well as quality of services (QoS) \cite{shao2024access}. A lot of methods have been proposed to determine the amount of loads to share and the set of edge users to hand over. However, the stability and convergence of load evolution based on the corresponding load balancing algorithm is rarely studied. We observe that while the current load balancing algorithm effectively reduces the load difference among base stations, the system does not fully converge to an ideal state and instead continues to oscillate. Oscillation implies that handover happens frequently, but fail to drive the load balance toward a more convergent or optimal state. Even though the handover hysteresis parameter is usually set in algorithms to prevent frequent handover, the underlying cause of the oscillation, the oscillation bound (upper and lower bound), and whether the system can eventually converge to an ideal state remain unclear to us. Understanding the underlying causes behind these phenomena enables the design of appropriate strategies that lead to more uniform load balancing across the network and avoid unnecessary handovers.


\begin{figure*}[ht]
    \centering
    \resizebox*{17cm}{!}{\includegraphics{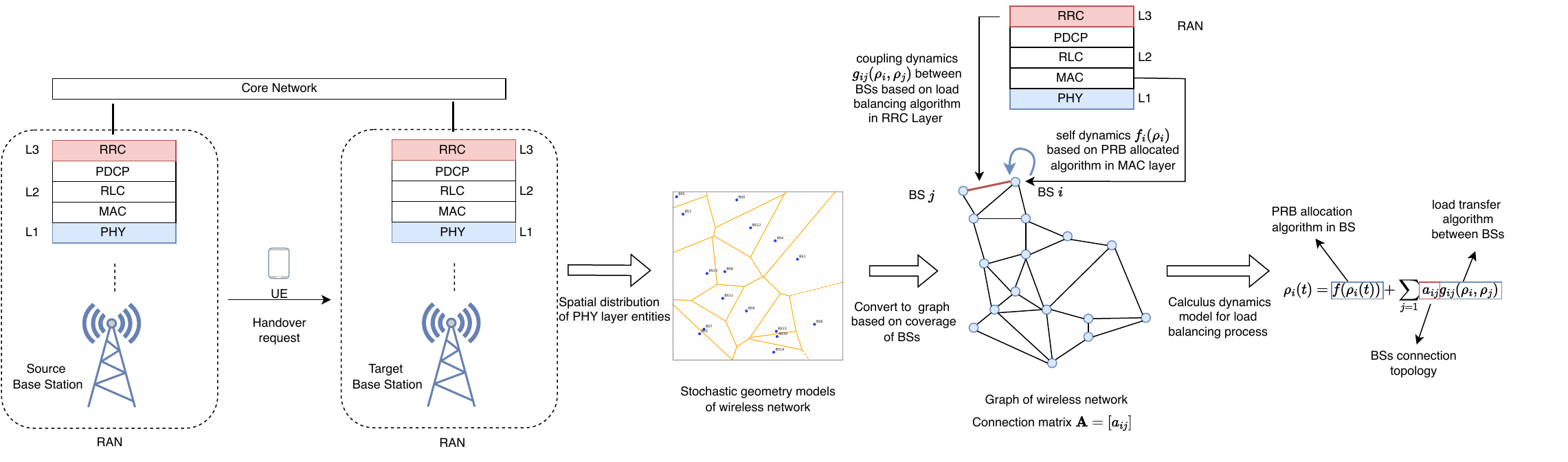}}
    \caption{Abstract the data-driven load balancing algorithms into a calculus networked dynamics model.}
    \label{fig: network_dynamics}
\end{figure*}

\begin{figure}[ht]
    \centering
    \resizebox*{8cm}{!}{\includegraphics{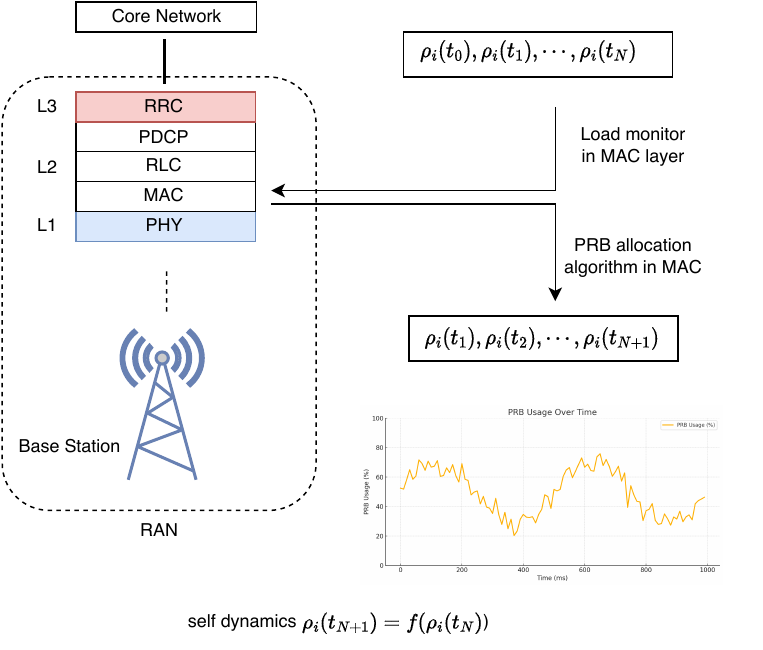}}
    \caption{Self dynamics based on the PRB allocation algorithm}
    \label{fig: self_dynamics}
\end{figure}

\begin{figure}[ht]
    \centering
    \resizebox*{8cm}{!}{\includegraphics{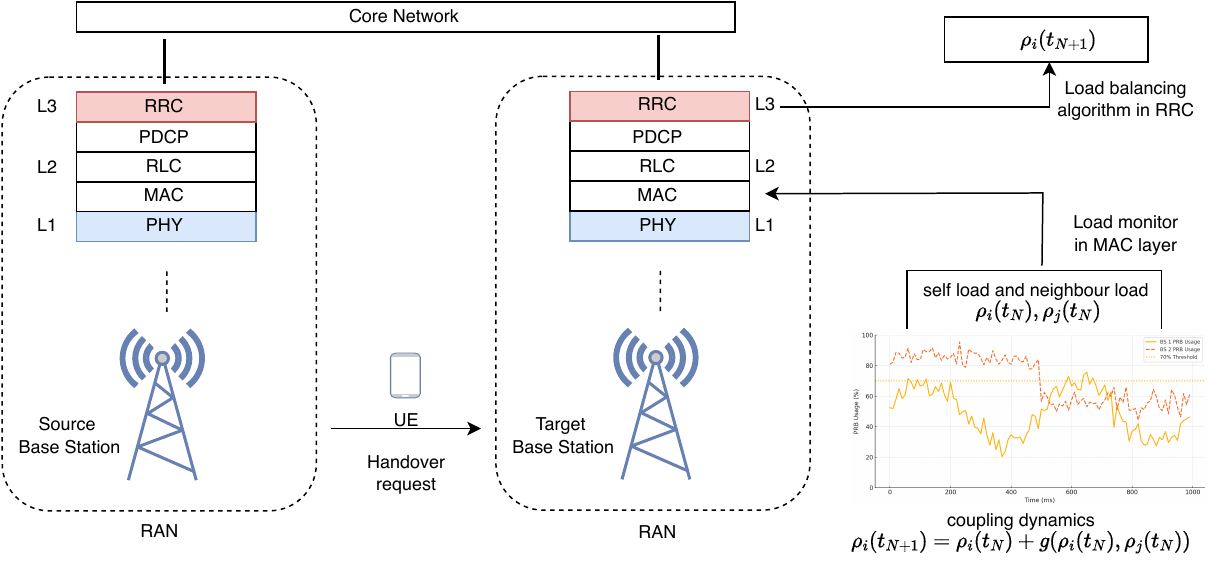}}
    \caption{Coupling dynamics based on load balancing algorithm.}
    \label{fig: coupling_dynamics}
\end{figure}

\subsection{Related work}
Current literature focuses on designing standard compliant uniform algorithms to address load unbalance problem. The load of BSs can be controlled through antenna beam tilting \cite{guo2013dynamic, fehske2013concurrent}, power control \cite{siomina2004optimization} to realise load balance across cellular network. More recently, a mobility load balancing (MLB) method has been proposed by tuning the value of cell individual offset (CIO) to adjust the logical cell boundaries of BSs \cite{kwan2010mobility, hasan2018adaptive, yang2012high}. By periodically monitoring the traffic load of BSs, overloaded BSs adjust the CIO to shrink the coverage and hand over edge users to the underloaded neighbouring BSs to alleviate their load. Some algorithms based on MLB have been proposed to optimise the CIO and determine the set of edge users to hand over. For example, a heuristic load balancing algorithm inspired by Kuramoto synchronisation model has been proposed in \cite{park2017mobility}. Deep reinforcement learning (DRL) methods have been successfully applied in load balancing problem \cite{wu2023reinforcement, xu2019load, alsuhli2021mobility, feriani2022multiobjective, sun2025proactive}. The size of the potential action space of DRL expands exponentially with the increase in the number of BSs, which poses significant challenges to scalability and efficient decision-making in load balancing. To overcome this problem, a decentralised DRL has been proposed in \cite{chang2022decentralized, alizadeh2024multi}.

While the aforementioned methods effectively distribute load evenly across the network, the convergence and stability of the resulting ideal load-balanced state remain unexplored. Stability in this context refers to the ability of all BSs to sustain an efficient and balanced load distribution, even in the presence of traffic fluctuations or external perturbations, which can be caused by the movement of users or the introduction of new users.  If the ideal load-balanced state is not stable, even minor perturbations can trigger constant handovers of edge users between base stations, resulting in degraded quality of service (QoS) and necessitating the use of complex hysteresis mechanisms to mitigate the instability. For an arbitrarily large network, instability can cause cascade handover (e.g., user handover in BS A causes unintended handovers in BS B, C, etc.) and hence energy efficiency problems, which have not been examined. Therefore, investigating the cascade stability of load balancing is a critical yet under explored area. Without a dynamics model for the load balancing process, it is difficult to analyse the evolution of load in each BSs in long term systematically.
Our previous work established a dynamical model for the load balancing process and derived an exact analytical condition for global stability, explicitly linking local load balancing dynamics with the underlying network topology \cite{moutsinas2019probabilistic, zou2022analysing, zou2025data}. It was based on the assumption that all BSs are in active mode and the load dynamics follows a homogeneous algorithm across the network. 

However, heterogeneous deployments, such as networks comprising macro, micro, and pico BSs \cite{wu2015energy}, \cite{wang2023base}, along with diverse control policies, have been widely adopted to achieve significant energy savings. Our previous work used regular patterns to examine cascade behaviour \cite{guo2013dynamic} and later proposed an analysis framework based on Gershgorin circles of the eigenvalues \cite{moutsinas2019probabilistic} are not valid in this heterogeneous scenario. To bridge this gap, in \cite{zou2024cascade}, we proposed a framework to analyse the stability of the synchronised state of traffic load balancing in heterogeneous networks, considering sleep mode with general load dynamic functions and random geographic networks.

\subsection{Research Gap}
According to the above stability analysis methods, the system can synchronise to the ideal state if the system satisfies some specific conditions. One of the condition is that the amount of decreased load in overloaded BS handing over user services equals to the increased load in the BS accepting this handover, which we call it conservative load transfer in this paper. With this condition, the stability function of the system has a symmetric structure and therefore can be simplified by eigen decomposition to relate the stability with eigenvalue spectrum of corresponding matrix. Without this symmetric structure, the stability analysis is much more complicated.

However, in reality, the handover is not always conservative. Instead, the non-conservative load transfers are commonly observed in the real world. This is because the Signal-to-Interference-plus-Noise Ratio (SINR) of a user may vary when the user's service is handed over from one BS to another. Therefore, the allocated resource, e.g. Physical Resource Block (PRB) to satisfy this user's requirement is different in BSs, which will cause the non-conservative load transfer. The existence of non-conservative load transfer will limit the applicability of our previously established theory in real-world scenarios. To bridge this gap, in this paper, we introduce the concepts, conservative load transfer and non-conservative load transfer. Based on this, we propose a theoretical framework to explain the reason why with the current load balancing algorithms, the load difference of BSs can effectively reduce at the beginning but not fully converge to the same state instead continues to oscillate. Also, the oscillation bound can be estimated according to this analysis framework. Based on the estimation of the oscillation bound, we can design an appropriate strategy to alleviate the bound to improve the performance of current load balancing algorithms.

\subsection{Novelty \& Contribution}
We observe the phenomenon that the current load balancing algorithm effectively reduces the load difference among BSs, however, the system does not always fully converge to an ideal state and instead continues to oscillate. The oscillation may cause the ping-pong effect without improving the performance of the system. To establish a theoretical framework to explain the potential reasons behind this phenomenon and provide a pathway to improve the load balancing algorithm

\begin{enumerate}

\item We establish a networked dynamic model of load evolution in the load balancing process, comprising self-dynamics within BSs, coupling dynamics across BSs, and the underlying network topology. The self-dynamics of load evolution is governed by the allocation algorithm in MAC layer, while the coupling dynamics is driven by the load balancing algorithm, which operates through logical network interfaces (e.g., F1-C/-U planes between radio units (RU)). The network topology is abstracted according to the stochastic spatial distribution of radio units (RUs) and the coverage relations in the wireless network. 
The algorithms are data-driven and can be anything reasonable, which is then projected onto a calculus (ODE) space for abstraction model.

\item Based on the dynamics model, we establish the stability criterion of the load balancing process for both continuous-time (if the load variation is sufficiently smooth and the balancing decisions are frequent made) and discrete-time dynamics models (the dynamics of load balancing are typically modeled in discrete time, as state updates and handover performs at discrete intervals in load balancing algorithms).

\item With the data-driven load balancing algorithms, the state of load evolution may oscillate instead of synchronising to the same state. To explain the potential reason behind this phenomenon, with the calculus dynamics model for the load balancing algorithm, we found that the existence of non-conservative load transfer causes the oscillation. The condition for the existence of oscillation bound has been established as well as the estimation of this bound if the non-conservative load transfer exists and the stability can not be guaranteed.

\item We found that the bound of the oscillation is decided by the non-conservation error, eigenvalue spectral and the amount of transfer load. Based on this, we can design appropriate strategy to alleviate the bound to improve the performance of current load balancing algorithms. 
    
\end{enumerate}

\section{System Setup}
\subsection{Model \& Assumptions}
Consider the area is covered by $N$ BSs. The signal-to-noise-plus-interference ratio (SINR) of user $u$ in BS $i$ at time $t$ is given by
\begin{equation}
    {\rm SINR}_{u,i}^t = \frac{P_i(t)G_{u,i}(t)}{N_0+\sum_{j \ne i}P_j(t)G_{u,j}(t)},
\end{equation}
where $P_i(t)$ is the transmission power of BS $i$ at time $t$, $G_{u,i}$ is the channel power gain between user $u$ and BS $i$ at time $t$, $N_0$ is the noise power. For user $u$ the maximum transmission rate from BS $i$ over one physical resource block (PRB) is 
\begin{equation}
    R_{u,i}^t=\Theta \log_2(1+ {\rm SINR}^t_{u, i}),
\end{equation}
where $\Theta$ is the spectrum bandwidth of one PRB.  Assume that each user has a constant bit rate requirement $\hat{R}_u(t)$ at time $t$. The number of required PRBs from BS $i$ to meet the demand $\hat{R}_u(t)$ is given by
\begin{equation}
    B_{u,i}(t) \ge \frac{\hat{R}_u^t}{R_{u, i}^t},
\end{equation}
The load of BS $i$ is 
\begin{equation}
    \rho_i(t) = \frac{\sum_{u=1}^{U_i(t)}B_{u, i}(t)}{B_i},
\end{equation}
where $B_i$ denotes the total number of PRBs in BS $i$. 

The handover process aims to transfer the user's service from its serving BS to a neighbouring BS. The handover process is triggered according to event A3, defined by 3GPP, 
\begin{equation}\label{equ: handover}
    M_i+\theta_{ij}>Hys + M_j,
\end{equation}
where $M_i$ and $M_j$ are the user measured values of reference signal received power (RSRP) from BS $i$ and $j$, $\theta_{ij}$ is the BS individual offset value of BS $i$ concerning BS $j$. $Hys$ is a hysteresis parameter defined for all BSs. The CIO is an offset that can be applied to alter the handover decision, which affects the radio service coverage of BS. Traffic load can be handed over to neighbour BSs with relatively low traffic load by adjusting the value of CIO, if the traffic load is high in the BS. The load balancing algorithm is based on the relative load levels compared to the neighbouring BSs. The objective of load balancing is 
\begin{equation}
    {\rm min} \sqrt{\sum_{i=1}^N(\rho_i(t)-\bar{\rho})^2},
\end{equation}
where $\bar{\rho}$ is the ideal state of load. The ideal situation is that $\rho_1(t) = \rho_2(t) = \cdots =\rho_N(t) = \bar{\rho}$.

\subsection{Calculus dynamics model for load balancing}
The state of the load in each BS is normally decided by the PRB scheduling algorithm within the BS, the handover algorithm among BSs and the last state of the related BSs (shown in Fig.~\ref{fig: network_dynamics}). Therefore, the evolution of the load in each BS can be written in a general form as 

\begin{equation}\label{equ: 7}
    \rho_i(t+1) = \rho_i(t) + f_i(\rho_i(t)) + \sum_{j=1, j\ne i}a_{ij}g_{ij}(\rho_i(t), \rho_j(t)), 
\end{equation}
where $\rho_i(t+1)$ is the next state of load in BS $i$,  $f_i(\rho_i(t))$ describes the load evolution of BS $i$ based on the PRB scheduling algorithm in MAC layer of BS $i$. For example, Round Robin (RR) \cite{mehlfuhrer2011vienna}, proportional fair scheduler \cite{liu2011asymptotic}, etc. $g_{ij}(\rho_i(t),\rho_j(t))$ describes the load sharing dynamics between BSs according to the handover algorithms. Generally, handover only happens when one BS is overloaded and its neighbouring nodes is underloaded. The amount of offload is based on the difference between $\rho_i(t)$ and $\rho_j(t)$. Therefore, $g_{ij}(\rho_i, \rho_j)$ can be further rewritten as $g_{ij}(\rho_i(t)-\rho_j(t))$. For example, in the handover algorithm in \cite{park2017mobility}, the amount of load can be shared to neighbouring BSs is $sin(\rho_j(t)-\rho_i(t))$, then the next state of load is $\rho_i(t+1) = \rho_i(t)+sin(\rho_j(t)-\rho_i(t))$ if only consider one sharing neighbour BS. $a_{ij}$ is the element of the connection matrix $\bm{A}$ which can describe any topology structure of connections between BSs in wireless networks. The network topology is abstracted according to the stochastic spatial distribution of radio
units (RUs) in the PHY layer and the coverage relations in the wireless
network (shown in Fig.~\ref{fig: network_dynamics}). If the coverage radius $r_i, r_j$ of RUs in BS $i$ and $j$ satisfy that $r_i + r_j > d_{ij}$, where $d_{ij}$ is the distance between RU $i$ and $j$, then BS $i$ and $j$ can share load with each other and $a_{ij}=1$, otherwise $a_{ij}=0$.

In the next section, the methodology to analyse load dynamics behaviour of BSs will be introduced.  Continuous-time (if the load variation is sufficiently smooth and the balancing decisions are frequently made, the discrete dynamics in equation~(\ref{equ: 7}) can be rewritten as $\frac{d\rho_i(t)}{dt}$) and discrete-time dynamics models (the dynamics of load balancing are typically modeled in discrete time, as state updates and handover performs at discrete intervals in load balancing algorithms) are both considered.

\section{Methodology}
\subsection{Continuous dynamics of load balance}
\subsubsection{Homogeneous Load Balancing Dynamics}
First, we consider a simple case to illustrate the dynamics behaviour of the system. Assume that $\rho_i(t) \in \mathbb{R}$ is continuous in $[0, 1]$ and there exist continuous functions $f(\cdot)$ and $g(\cdot)$ can describe the load balancing algorithm. Assume that the PRB scheduling algorithm is the same in each BS. The handover algorithms among BSs are the same. The dynamics of $\rho_i(t)$ is  
\begin{equation}\label{equ: homo dynamics}
    \frac{d\rho_i(t) }{dt}= f(\rho_i(t)) + \sum_{j=1,j \ne i} a_{ij}g(\rho_i(t)-\rho_j(t)).
\end{equation}

To synchronize to the ideal state $\rho_1(t)=\rho_2(t)=\cdots \rho_N(t) = \bar{\rho}$, it requires equation~(\ref{equ: homo dynamics}) satisfies two conditions: 

\noindent 1) $\frac{d \rho_i(t)}
{dt}\mid_{\rho_i(t)=\bar{\rho}}=0$; 

\noindent 2) $\frac{d\dot{\rho}_i(t)}{d\rho_i}\mid_{\rho_i(t)=\bar{\rho}}<0$. 

The first condition indicates the existence of the synchronized equilibrium and the second one indicates the stability of this equilibrium. Generally, the handover is triggered only if $\rho_i(t)$ is overloaded, assume BSs are in active mode, e.g. $\rho_i(t)>\bar{\rho}$ and the neighbour BS $j$ is under-loaded e.g. $\rho_j(t)<\bar{\rho}$. 

If $\rho_i(t)=\rho_j(t)=\bar{\rho}$, then $g(0)=0$. Therefore, to satisfy condition 1), it requires that $f(\bar{\rho})=0$. To analyze the stability of the equilibrium, equation~(\ref{equ: homo dynamics}) is expanded around equilibrium $\bar{\rho}$ by Taylor series as $f(\rho_i(t))=f(\bar{\rho})+f'(\bar{\rho})(\rho_i(t)-\bar{\rho})+ o(\rho_i(t)-\bar{\rho})$, where $o$ represents higher order infinitesimal. Here we only consider the first order expansion, so 
\begin{equation}
    f(\rho_i(t)) \approx \bar{\rho}+ f'(\bar{\rho})(\rho_i(t)-\bar{\rho})
\end{equation}
Similarly, we expand $g(\rho_j(t)-\rho_i(t))$ around $(\bar{\rho}, \bar{\rho})$. Let $z=\rho_j(t)-\rho_i(t)$.

\begin{equation}
    g(\rho_i(t)-\rho_j(t)) \approx g(0)+\frac{\partial g}{\partial \rho_i}(\rho_i(t)-\bar{\rho})+\frac{\partial g}{\partial \rho_j}(\rho_j(t)-\bar{\rho})
\end{equation}

\[
    \frac{\partial g}{\partial \rho_i} = \frac{d g}{d z} \frac{\partial z}{\partial \rho_i} = g'(\rho_i(t)-\rho_j(t))
\]

\[
    \frac{\partial g}{\partial \rho_j} = \frac{d g}{d z} \frac{\partial z}{\partial \rho_j} = -g'(\rho_i(t)-\rho_j(t))
\]

\[
    g(\rho_i(t)-\rho_j(t)) \approx g'(0)(\rho_i(t)-\rho_j(t))
\]

Write equation~(\ref{equ: homo dynamics}) in a matrix form based on Taylor series around the equilibrium, we can get
\begin{equation}
    \dot{\bm{\rho}} = f'(\bar{\rho})\bm{I}(\bm{\rho}-\bar{\bm{\rho}}) + g'(0)\bm{L}\bm{\rho},
\end{equation}
where $\bm{I}$ is the identical matrix, $\bm{\rho} \in \mathbb{R}^{N \times 1} =[\rho_1, \rho_2, \cdots, \rho_N]^{\top}$. $\bm{L} = \bm{D}-\bm{A}$ is a Laplacian matrix, where $\bm{D}$ is a diagonal matrix with $\bm{D} \in \mathbb{R}^{N \times N} = {\rm diag}\{D_1, D_2, ... D_N\}$, $D_i = \sum_{j \ne i}a_{ij}$. $\bar{\bm{\rho}} = \bar{\rho}\bm{1}$, where $\bm{1} \in \mathbb{R}^{N \times 1}= [1, 1, \cdots, 1]^{\top}$. Since $\bm{L}$ is a Laplacian matrix, it always can decompose as $\bm{L}=\bm{S}^{-1}\bm{\Lambda}\bm{S}$, where $\bm{\Lambda}$ is a diagonal matrix, $\bm{\Lambda}={\rm diag}\{\lambda_1, \lambda_2, \cdots \lambda_N\}$. $\lambda_i$ is eigenvalue of $\bm{L}$ and $0 = \lambda_1 \le \lambda_2 \le \cdots \le \lambda_N$.
\begin{equation}\label{equ: matrix form}
    \dot{\bm{\rho}} = \bm{S}^{-1}(f'(\bar{\rho})\bm{I}+g'(0)\bm{\Lambda})\bm{S}\bm{\rho}-f'(\bar{\rho})\bar{\bm{\rho}}
\end{equation}
For simplicity, let $\tilde{\bm{\Lambda}} = f'(\bar{\rho})\bm{I}+g'(0)\bm{\Lambda}$. The stability of equation~(\ref{equ: matrix form}) is decided by the distribution of eigenvalues of $\tilde{\bm{\Lambda}}$, $\tilde{\lambda}_i$. If ${\rm Re}(\tilde{\lambda})_i<0$, then the system is asymptotic stable around the equilibrium. Since load sharing is from overloaded BS to the under-loaded one, $g(\rho_i(t)-\rho_j(t))<0$, $\rho_i(t)-\rho_j(t)>0$; $g(\rho_j(t)-\rho_i(t))=0$, $\rho_j(t)-\rho_i(t)=0$; $g(\rho_i(t)-\rho_j(t))>0$, $\rho_i(t)-\rho_j(t)<0$. Therefore, $g'(0)<0$. For $f(\cdot)$, if $\rho_i(t)<\bar{\rho}$, the BS can allocate more PRB to users to improve QoS. If $\rho_i(t)>\bar{\rho}$ the BS will allocate less PRB to users to alleviate BS's load. Therefore, $f'(\bar{\rho})<0$. If the load balance algorithm satisfies these two conditions i.e. $f'(\bar{\rho})<0$, $g'(0)<0$, then the system can synchronise to the ideal state $\bar{\rho}$ and maintain stable.

\subsubsection{Heterogeneous Load Balancing Dynamics with conservative load transfer}
We now consider a more general scenario in which the function $f(\cdot)$ is specific to each BS, and $g(\cdot)$ likewise varies among BSs. In this case, the load dynamics of each BS is 
\begin{equation}\label{equ: coupling_dynamics}
    \dot{\rho}_i(t) = f_i(\rho_i(t))+\sum_{j=1, j\ne i}^Na_{ij}g_{ij}(\rho_i(t)-\rho_j(t)),
\end{equation}
where $f_i(\rho_i(t))$ is the self-load dynamics of a BS related to the PRB, $g_{ij}(\rho_j(t)-\rho_i(t))$ is the offloading dynamics between BSs related to the handover process as discussed before.

The desirable equilibrium for maximum service efficiency is $\rho_i(t)= \bar{\rho}$. For any dynamics, $f_i(\bar{\rho})=0, g_{ij}(0)=0$. If all BSs achieve the same state $\rho_1(t), \rho_2(t), \cdots, \rho_N(t) =\bar{\rho}$, then we say the whole system synchronises to a desirable state. Obviously, $\rho_i(t)=\bar{\rho}$ is a solution of equation~(\ref{equ: coupling_dynamics}). The problem is whether $\rho_i(t) = \bar{\rho}$ is stable with the small perturbation $\Delta \rho_i(t)$. Assume that $\Delta \rho_i(t) = \rho_i(t)-\bar{\rho}$. $f_i(\rho_i(t))$ and $g_{ij}(\rho_i(t),\rho_j(t))$ can be expressed as 
\[f_i(\rho_i(t))=f_i(\bar{\rho})+f'_i(\bar{\rho})\Delta \rho_i(t)\]
 
\[\begin{aligned}
g_{ij}(\rho_i(t),\rho_j(t))=g_{ij}(\bar{\rho},\bar{\rho})+\frac{\partial g_{ij}(\bar{\rho},\bar{\rho})}{\partial \rho_i(t)}\Delta \rho_i(t) \\
+\frac{\partial g_{ij}(\bar{\rho},\bar{\rho})}{\partial \rho_j(t)}\Delta \rho_j(t)
\end{aligned}\] by the first-order Taylor series.

Since $g_{ij}(\rho_i(t),\rho_j(t))$ always satisfies $g_{ij}(\rho_i(t), \rho_j(t))=0$ with $\rho_i(t)=\rho_j(t)$, $g_{ij}(\rho_i(t), \rho_j(t))$ has the term $\rho_i(t)-\rho_j(t)$ and it is natural to assume that 
\[g_{ij}(\rho_i(t), \rho_j(t))=(\rho_i(t)-\rho_j(t))h_{ij}(\rho_i(t), \rho_j(t)).\] 
Then $\frac{\partial g_{ij}(\bar{\rho},\bar{\rho})}{\partial \rho_i(t)}=h_{ij}(\bar{\rho},\bar{\rho}), \frac{\partial g_{ij}(\bar{\rho},\bar{\rho})}{\partial \rho_j(t)}=-h_{ij}(\bar{\rho},\bar{\rho})$. We have $\frac{\partial g_{ij}(\bar{\rho},\bar{\rho})}{\partial \rho_i(t)}=-\frac{\partial g_{ij}(\bar{\rho},\bar{\rho})}{\partial \rho_j(t)}$. Expand equation~(\ref{equ: coupling_dynamics}) around the equilibrium $\bar{\rho}$ as
\begin{equation} \label{equ: load_taylor}
    \dot{\rho}_i(t) = f_i'(\bar{\rho})\Delta \rho_i(t) + \sum_{j=1, j\ne i} a_{ij}h_{ij}(\bar{\rho},\bar{\rho})(\Delta \rho_i(t) -\Delta \rho_j(t))
\end{equation}

Rewrite equation~(\ref{equ: load_taylor}) as a matrix form, then we have 
\begin{equation}
    \dot{\bm{\rho}} = \bm{D}_f \Delta \bm{l}+ (\bm{K}-\bm{Q})\Delta \bm{l}
\end{equation}
where $\bm{D}_f={\rm diag}\{f_1'(\bar{\rho}), f_2'(\bar{\rho}), \cdots, f_N'(\bar{\rho})\}$, ${\rm diag} \{ \}$ represents the diagonal matrix. $\bm{Q} = \bm{A}\odot \bm{H}$, where $\odot$ is a Hadamard product.
\begin{equation}
    \bm{H} =  \begin{bmatrix}
    0 & h_{12}(\bar{\rho}, \bar{\rho}) & \cdots & h_{1N}(\bar{\rho}, \bar{\rho}) \\
    h_{21}(\bar{\rho}, \bar{\rho}) & 0  & \cdots & h_{2N}(\bar{\rho}, \bar{\rho}) \\
    \vdots & \vdots & \vdots & \vdots\\
    h_{N1}(\bar{\rho}, \bar{\rho}) & h_{N2}(\bar{\rho}, \bar{\rho}) &  \cdots & 0
    \end{bmatrix}
\end{equation}
$\bm{K}$ is a diagonal matrix, where $k_{ii} = \sum_{j=1}^{N}q_{ij}$, $q_{ij}$ are the elements of $\bm{Q}$.
According to Lyapunov's second method for stability, if there exists a Lyapunov function $V(\bm{x})=\bm{x}^{\top}\bm{P}\bm{x} > 0$ ($V(\bm{x})=0$ only if $\bm{x}=0$), where $\bm{P}$ is positive definite, and $\dot{\bm{V}}(\bm{x})<0$, then the system is stable at the equilibrium. Let $\delta{\bm{l}}=\bm{x}$,then $\dot{\bm{V}}(\bm{x}) = \dot{\bm{x}}^{\top}\bm{P}\bm{x}+\bm{x}^{\top}\bm{P}\dot{\bm{x}}$ can be expanded as 
\begin{equation}
  \dot{\bm{V}}(\bm{x})=(\bm{x}^{\top}\bm{D}_f^{\top}+\bm{x}^{\top}(\bm{K}^{\top}-\bm{Q}^{\top}))\bm{P}\bm{x}+\bm{x}^{\top}\bm{P}(\bm{D}_f\bm{x}+(\bm{K}-\bm{Q})\bm{x})
\end{equation}
Let $\bm{P}=\bm{I}$, where $\bm{I}$ is identity matrix. Then
\begin{equation}
    \dot{\bm{V}}(\bm{x})=2(\bm{x}^{\top}\bm{D}_f\bm{x}+\bm{x}^{\top}(\bm{K}-\bm{Q})\bm{x}).
\end{equation}

If $h_{ij}(\bar{\rho},\bar{\rho})<0$, $k_{ii}<0$. Let $\bm{M}=\bm{K}-\bm{Q}$ and  $\lambda_i$ be the eigenvalue of matrix $\bm{M}$. According to the Gershgorin circle theorem \cite{varga2011gervsgorin},

\begin{equation}\label{equ: gershgorin}
    |\lambda_i - M_{ii}| \le \sum_{j=1, j\ne i}|M_{ij}|.
\end{equation}
Since $M_{ii}=\sum_{j=1, j\ne i}M_{ij}$, $\lambda_i \le 0$. 

\begin{definition}
    If BS $i$ handover users' service to BS $j$, then the amount of load decreased by $i$ equals the amount increased by $j$: $\Delta \rho_i + \Delta\rho_j = 0$, then this kind of handover is defined as conservative load transfer. Otherwise the handover is non-conservative load transfer.
\end{definition}

With conservative load transfer, $g_{ij} = g_{ji}$. 
Then $h_{ij}=h_{ji}$ and $\bm{M}$ is symmetric. According to lemma~(\ref{Lem2}), we can get $\dot{\bm{V}}(x)<0$. Therefore, with conservative load transfer, the system need to satisfy following conditions to achieve a synchronised ideal load balance state:

\noindent 1) $f_i(\bar{\rho})=0$, $g_{ij}(\bar{\rho}, \bar{\rho})=0.$

\noindent 2) $f_i'(\bar{\rho})<0$, $h_{ij}=h_{ji}<0$.

This means that to keep the stability of the ideal synchronised state, BSs can only handover edge users to BSs with less load. Besides, within the BS, if BS is overloaded, then it should allocate fewer PRBs to users. 

\subsubsection{Heterogeneous Load Balancing Dynamics with non-conservative load transfer}
However, the conservative load transfer is an ideal situation and cannot always be guaranteed. Here, we briefly explain why non-conservative transfer always happen. If a user $v$ is handed over from BS $j$ to BS $i$, the perturbation of load in BS $i$ is $\Delta \rho_i=-\frac{B_{v,i}(t)}{B_i(t)}$ and the perturbation in BS $j$ is $\Delta \rho_j = \frac{B_{v,j}(t)}{B_j(t)}$. Normally, the perturbation of load in BS $i$ and $j$ caused by user $v$ move from $j$ to $i$ is different since ${\rm SINR}_{v,i}^t$ and ${\rm SINR}_{v,j}^t$ may be different as well as $B_i(t)$ and $B_{j}(t)$. 

In this case, $\bm{M}$ is not symmetric, which means even eigenvalues of $\bm{M}$ $\lambda_i \le 0$ can not guarantee $\bm{x}^{\top}\bm{M}\bm{x} \le 0$.

\begin{Lemma}\label{Lem1}
    For a square matrix $\bm{M}$, all its eigenvalues $\lambda_i \le 0$. If $\bm{S}=\frac{\bm{M}^{\top}+\bm{M}}{2}$ satisfies that for any $\bm{x}$, $\bm{x}^{\top}\bm{S}\bm{x} \le 0$, then $\bm{x}^{\top}\bm{M}\bm{x} \le 0$.
\end{Lemma}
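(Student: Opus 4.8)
The plan is to exploit the elementary fact that a real quadratic form sees only the symmetric part of its defining matrix, so that the quadratic forms associated with $\bm{M}$ and with $\bm{S}$ coincide identically for every $\bm{x}$. Once this identity is in hand, the claimed implication follows in a single line.

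First I would observe that for any real vector $\bm{x}$, the expression $\bm{x}^{\top}\bm{M}\bm{x}$ is a $1\times 1$ matrix, i.e.\ a scalar, and therefore equals its own transpose. Taking the transpose gives $(\bm{x}^{\top}\bm{M}\bm{x})^{\top}=\bm{x}^{\top}\bm{M}^{\top}\bm{x}$, so that $\bm{x}^{\top}\bm{M}\bm{x}=\bm{x}^{\top}\bm{M}^{\top}\bm{x}$ holds for every $\bm{x}$. Next I would average these two equal expressions to obtain
\begin{equation}
\bm{x}^{\top}\bm{M}\bm{x}=\tfrac{1}{2}\left(\bm{x}^{\top}\bm{M}\bm{x}+\bm{x}^{\top}\bm{M}^{\top}\bm{x}\right)=\bm{x}^{\top}\frac{\bm{M}+\bm{M}^{\top}}{2}\bm{x}=\bm{x}^{\top}\bm{S}\bm{x}.
\end{equation}
This identity is valid regardless of whether $\bm{M}$ is symmetric. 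The hypothesis then supplies $\bm{x}^{\top}\bm{S}\bm{x}\le 0$ for every $\bm{x}$, and substituting into the identity immediately yields $\bm{x}^{\top}\bm{M}\bm{x}\le 0$, which is exactly the conclusion.

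I do not anticipate a genuine obstacle: the statement is essentially a restatement of the symmetric-part identity for real quadratic forms. The only point demanding care is that the argument relies on working over the reals, where $\bm{x}^{\top}\bm{M}\bm{x}$ is a real scalar and hence invariant under transposition; over the complex field one would instead need the Hermitian part and conjugate transpose. It is also worth flagging that the stated eigenvalue hypothesis $\lambda_i\le 0$ plays no role whatsoever in this implication—the quadratic-form identity alone carries the proof—so that assumption is redundant for the lemma as phrased, even though negativity of the eigenvalues is the property being invoked downstream in the stability analysis.
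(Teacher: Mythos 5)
Your proof is correct and is essentially the same argument as the paper's: both rest on the scalar identity $\bm{x}^{\top}\bm{M}\bm{x}=(\bm{x}^{\top}\bm{M}\bm{x})^{\top}=\bm{x}^{\top}\bm{M}^{\top}\bm{x}$, which the paper applies to the skew-symmetric part $\bm{B}=\frac{\bm{M}-\bm{M}^{\top}}{2}$ to show $\bm{x}^{\top}\bm{B}\bm{x}=0$ and then writes $\bm{M}=\bm{S}+\bm{B}$, whereas you apply it to $\bm{M}$ directly and average -- a purely cosmetic difference. Your closing observation is also accurate: the eigenvalue hypothesis $\lambda_i\le 0$ is never used, either in your argument or in the paper's own proof.
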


\begin{proof}
    Assume $\bm{B}=\frac{\bm{M}-\bm{M}^{\top}}{2}$. Then $\bm{B}^{\top}=\frac{\bm{M}-\bm{M}^{\top}}{2}=-\bm{B}$. We have $\bm{x}^{\top}\bm{B}\bm{x}=(\bm{x}^{\top}\bm{B}\bm{x})^{\top}=\bm{x}^{\top}\bm{B}^{\top}\bm{x}$, which means that $\bm{x}^{\top}(\bm{B}-\bm{B}^{\top})\bm{x}=2\bm{x}^{\top}\bm{B}\bm{x}=0$. If $\bm{x}^{\top}\bm{S}\bm{x} \le 0$, $\bm{x}^{\top}(\bm{S}+\bm{B})\bm{x} \le 0$. $\bm{S}+\bm{B}=\bm{M}$, therefore, $\bm{x}^{\top}\bm{M}\bm{x} \le 0$.
\end{proof}

\begin{Lemma}\label{Lem2}
    For any real symmetric matrix $\bm{S}$, if all eigenvalues $\lambda_i \le 0$, then for any $\bm{x}, \bm{x}^{\top}\bm{S}\bm{x}\le 0$ 
\end{Lemma}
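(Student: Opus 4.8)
The plan is to invoke the spectral theorem for real symmetric matrices, which reduces the quadratic form to a diagonal one where the sign is immediate. The essential point is that symmetry of $\bm{S}$ guarantees not merely diagonalizability but \emph{orthogonal} diagonalizability, and it is the orthogonality that makes the change of variables clean.

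First I would write $\bm{S}=\bm{U}\bm{\Lambda}\bm{U}^{\top}$, where $\bm{U}$ is orthogonal (so $\bm{U}^{\top}\bm{U}=\bm{I}$) and $\bm{\Lambda}={\rm diag}\{\lambda_1,\lambda_2,\cdots,\lambda_N\}$ collects the eigenvalues of $\bm{S}$, all of which satisfy $\lambda_i\le 0$ by hypothesis. Next I would substitute this factorization into the quadratic form and set $\bm{y}=\bm{U}^{\top}\bm{x}$, giving
\begin{equation}
    \bm{x}^{\top}\bm{S}\bm{x}=\bm{x}^{\top}\bm{U}\bm{\Lambda}\bm{U}^{\top}\bm{x}=\bm{y}^{\top}\bm{\Lambda}\bm{y}=\sum_{i=1}^{N}\lambda_i y_i^2 .
\end{equation}
Since each $\lambda_i\le 0$ and each $y_i^2\ge 0$, every summand is nonpositive, so the whole sum is $\le 0$. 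Because $\bm{x}$ was arbitrary and $\bm{y}$ ranges over all of $\mathbb{R}^N$ as $\bm{x}$ does, this establishes $\bm{x}^{\top}\bm{S}\bm{x}\le 0$ for every $\bm{x}$, completing the argument.

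There is no genuine obstacle here, as the result is a textbook characterization of negative semidefiniteness; the only thing one must be careful about is that the orthogonal diagonalization genuinely requires $\bm{S}$ to be symmetric (a general matrix with nonpositive eigenvalues need not yield a nonpositive quadratic form, which is precisely the gap that Lemma~\ref{Lem1} is designed to patch). I would therefore make explicit that the spectral theorem, and hence the orthogonality of $\bm{U}$ that keeps the substitution $\bm{y}=\bm{U}^{\top}\bm{x}$ a bijection of $\mathbb{R}^N$, is the load-bearing hypothesis of the lemma.
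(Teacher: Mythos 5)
Your proof is correct and follows essentially the same route as the paper: both invoke the spectral theorem to write $\bm{S}$ as an orthogonal diagonalization (the paper writes $\bm{S}=\bm{U}^{\top}\bm{\Lambda}_s\bm{U}$, you write $\bm{S}=\bm{U}\bm{\Lambda}\bm{U}^{\top}$, which differ only in labeling) and then observe that the quadratic form becomes $\sum_i \lambda_i y_i^2\le 0$. Your version is somewhat more explicit than the paper's one-line argument, spelling out the change of variables and why symmetry is the load-bearing hypothesis, but it is the same proof.
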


\begin{proof}
    For any real symmetric matrix $\bm{S}$, it is always can be decomposed as $\bm{S} = \bm{U}^{\top}\bm{\Lambda}_s\bm{U}$, where $\bm{\Lambda}_s$ is a diagonal matrix $\bm{\Lambda}_s = {\rm diag}\{\lambda_1, \lambda_2, \cdots, \lambda_N\}$, where $\lambda_i$ is the eigenvalue of $\bm{S}$. $\bm{x}^{\top}\bm{S}\bm{x}=(\bm{Ux})^{\top}\bm{\Lambda}_s\bm{Ux} \le 0$
\end{proof}

$\bm{S}=\frac{\bm{M}^{\top}+\bm{M}}{2}=\bm{S}^{\top}$ is a symmetric matrix. According to the Proposition~(\ref{Lem2}), if all eigenvalues of $\bm{S}$, $\lambda_i \le 0$, for any $\bm{x}$, $\bm{x}^{\top}\bm{S}\bm{x} \le 0$.  $\bm{S}$ has the form
\begin{equation} \label{equ: matrixS}
    \begin{bmatrix}
    \sum_j a_{1j} h_{1j} & -\frac{a_{12}(h_{12}+h_{21})}{2} & \cdots & -\frac{a_{1N}(h_{1N}+h_{N1})}{2} \\
    -\frac{a_{21}(h_{12}+h_{21})}{2} & 0  & \cdots & -\frac{a_{2N}(h_{2N}+h_{N2})}{2} \\
    \vdots & \vdots & \vdots & \vdots\\
    -\frac{a_{N1}(h_{1N}+h_{N1})}{2} &-\frac{a_{N2}(h_{N2}+h_{2N})}{2}  &  \cdots & \sum_j a_{Nj} h_{Nj}
    \end{bmatrix}
\end{equation}
For simplification, in the equation~(\ref{equ: matrixS}), we use $h_{ij}$ to represent $h_{ij}(\bar{\rho}, \bar{\rho})$ here. 

Based on the Gershgorin circle theorem in equation~(\ref{equ: gershgorin}), eigenvalues $\lambda_i$ of matrix $\bm{S}$ satisfy $|\lambda_i - S_{ii}| \le \displaystyle \sum_{j=1, j\ne i}|S_{ij}|.$ $S_{ii}- \displaystyle \sum_{j=1, j\ne i}S_{ij}=1/2(-\displaystyle \sum_{j=1, j\ne i}a_{ij}h_{ij}(\bar{\rho},\bar{\rho})+\displaystyle \sum_{j=1, j\ne i}a_{ji}h_{ji}(\bar{\rho}, \bar{\rho}))$. We can obtain if $\displaystyle \sum_{j=1, j\ne i}|S_{ij}| \ge \sum_{j=1, j\ne i}|S_{ji}|$, all eigenvalues of matrix $\bm{S}$, $\lambda_i\le 0$. According to lemma~(\ref{Lem1}), for any $\bm{x}$, $\bm{x}^{\top}\bm{M}\bm{x}\le 0$. Since $f_i(\bar{\rho})<0$, $\dot{\bm{V}}(x)<0$ for any $\bm{x} \ne 0$. Therefore, if $|\displaystyle \sum_{j=1, j\ne i}a_{ij}h_{ij}(\bar{\rho}, \bar{\rho})| \ge |\displaystyle \sum_{j=1, j\ne i}a_{ji}h_{ji}(\bar{\rho}, \bar{\rho})|$, then the system can maintain the stability around the ideal equilibrium $\bar{\rho}$. This implies that for any BS, the load sharing rate to the other BSs is larger than the rate of load receiving from other BSs, then the BS can maintain the ideal stable equilibrium.

Now consider the self dynamics and coupling dynamics $\bm{D}_f, \bm{K}$ and $\bm{Q}$ together, and let $\bm{M}=\bm{D}_f+\bm{K}-\bm{Q}$. $\bm{S}=\frac{\bm{M}+\bm{M}^{\top}}{2}$. $S_{ii}= f'_i(\bar{\rho})+\displaystyle \sum_{j=1, j\ne i}a_{ij}h_{ij}(\bar{\rho}, \bar{\rho})$ and $S_{ij}=1/2(\displaystyle \sum_{j=1, j\ne i}a_{ij}h_{ij}(\bar{\rho}, \bar{\rho})+\displaystyle \sum_{j=1, j\ne i}a_{ji}h_{ji}(\bar{\rho}, \bar{\rho}))$. Similarly to the previous analysis, we can conclude if $| 2f'_i(\bar{\rho})+\displaystyle \sum_{j=1, j\ne i}a_{ij}h_{ij}(\bar{\rho}, \bar{\rho})| \ge |\displaystyle \sum_{j=1, j\ne i}a_{ji}h_{ji}(\bar{\rho}, \bar{\rho})|$. This is a more relaxed condition to maintain stability. Therefore, the stability condition is:

\noindent 1) $f_i(\bar{\rho})=0$, $g_{ij}(\bar{\rho}, \bar{\rho})=0$

\noindent 2) $| 2f'_i(\bar{\rho})+\displaystyle \sum_{j=1, j\ne i}a_{ij}h_{ij}(\bar{\rho}, \bar{\rho})| \ge |\displaystyle \sum_{j=1, j\ne i}a_{ji}h_{ji}(\bar{\rho}, \bar{\rho})|$

This indicates that if the rate to decrease BS's load by self-adjusting and offloading is larger than the load increasing rate by accepting handover from others, then it can maintain stability. This means that the BS has the capability to tolerate the upcoming load shared by neighbours to maintain the ideal state.

\subsection{Discrete dynamics of load balance}
\subsubsection{Conservative Load Transfer}
Indeed, most load balancing algorithms primarily focus on load distribution among base stations (BSs), while rarely addressing resource adjustment within a base station (intra-BS). Therefore, the dynamics of load is mainly decided by the load sharing dynamics between BSs. Besides, load balancing algorithms usually update load step by step, so the dynamics model can be established in discrete time. The load of BS $i$ is updated by 
\begin{equation} \label{equ: identical load dynamics}
    \rho_i(t+1) = \rho_i(t) + \epsilon\sum_{j=1, \neq i}a_{ij}(\rho_j(t)-\rho_i(t))
\end{equation}

\begin{Proposition}
    The system can achieve the same state $\rho_1(t)=\rho_2(t)=\cdots \rho_N(t)$, $t\to +\infty$ if $\epsilon < \frac{2}{\lambda_{\rm max}}$. 
\end{Proposition}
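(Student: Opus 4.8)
The plan is to recognize the scalar update in equation~(\ref{equ: identical load dynamics}) as a linear iteration driven by the graph Laplacian, and then to read off convergence directly from the eigenvalue spectrum, mirroring the continuous-time argument. First I would collapse the sum: since $\sum_{j \ne i} a_{ij}(\rho_j(t)-\rho_i(t)) = \sum_{j \ne i} a_{ij}\rho_j(t) - D_i\rho_i(t)$ is exactly the $i$-th entry of $-\bm{L}\bm{\rho}(t)$ with $\bm{L}=\bm{D}-\bm{A}$, the recursion becomes $\bm{\rho}(t+1) = (\bm{I}-\epsilon\bm{L})\bm{\rho}(t)$. Writing $\bm{W}=\bm{I}-\epsilon\bm{L}$, iteration gives $\bm{\rho}(t)=\bm{W}^t\bm{\rho}(0)$, so the entire claim reduces to controlling the powers of $\bm{W}$.

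Next I would invoke the spectral decomposition $\bm{L}=\bm{S}^{-1}\bm{\Lambda}\bm{S}$ introduced earlier, with $0=\lambda_1 \le \lambda_2 \le \cdots \le \lambda_N = \lambda_{\max}$. Because $\bm{W}$ is a polynomial in $\bm{L}$, it is simultaneously diagonalized by the same $\bm{S}$ and has eigenvalues $\mu_i = 1-\epsilon\lambda_i$. The mode attached to $\lambda_1=0$ gives $\mu_1=1$ with eigenvector $\bm{1}$, which is precisely the consensus direction that must be retained; for every remaining mode $i \ge 2$ convergence requires $|\mu_i|<1$, i.e. $|1-\epsilon\lambda_i|<1$, equivalently $0<\epsilon\lambda_i<2$. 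Since $\lambda_i \le \lambda_{\max}$ for all $i$, the binding constraint over $i \ge 2$ is exactly $0<\epsilon<2/\lambda_{\max}$. Under this bound every non-consensus eigenvalue lies strictly inside the unit disk, so $\bm{W}^t$ converges to the projector onto $\mathrm{span}\{\bm{1}\}$ and $\bm{\rho}(t)\to\bar{\rho}\bm{1}$; because $\bm{1}^{\top}\bm{L}=0$ the total load $\bm{1}^{\top}\bm{\rho}(t)$ is conserved, which identifies the common limit as $\bar{\rho}=\tfrac{1}{N}\bm{1}^{\top}\bm{\rho}(0)$ and yields $\rho_1(t)=\cdots=\rho_N(t)$.

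The hardest part will be the structural assumptions hidden behind the eigenvalue count rather than the algebra. To get a \emph{simple} eigenvalue at $\mu_1=1$ and strict decay of all other modes, I need the graph to be connected so that $\lambda_2>0$; otherwise the zero eigenvalue of $\bm{L}$ is repeated and one only obtains consensus within each connected component. I would also have to justify that no Jordan block accompanies $\mu_1=1$, which for a symmetric Laplacian (undirected boundary relations, $a_{ij}=a_{ji}$) is automatic since $\bm{W}$ is then symmetric and orthogonally diagonalizable with real eigenvalues, making the unit-disk test $|1-\epsilon\lambda_i|<1$ directly meaningful. If the connection matrix were asymmetric the eigenvalues could be complex and the bound $2/\lambda_{\max}$ would need reinterpretation via the spectral radius, so I would state the connectivity and symmetry hypotheses explicitly as the price for the clean threshold.
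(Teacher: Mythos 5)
Your proof takes essentially the same route as the paper's: rewrite the update as $\bm{\rho}(t+1)=(\bm{I}-\epsilon\bm{L})\bm{\rho}(t)$, diagonalize the Laplacian $\bm{L}=\bm{S}^{-1}\bm{\Lambda}\bm{S}$, and require $|1-\epsilon\lambda_i|<1$ for all $i\ge 2$, which gives $0<\epsilon<\frac{2}{\lambda_{\rm max}}$. Your additional observations --- that connectivity (i.e. $\lambda_2>0$) is needed for the consensus eigenvalue to be simple, and that $\bm{1}^{\top}\bm{L}=0$ identifies the common limit as the initial average --- are refinements the paper leaves implicit, but they strengthen rather than alter the argument.
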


\begin{proof}
    It is easy to check $\rho_i(t) = \rho_2(t)=\cdots \rho_N(t) = \bar{\rho}(t)$ is the solution of Equation~(\ref{equ: identical load dynamics}). Equation~(\ref{equ: identical load dynamics}) can be rewritten in a matrix form as $\bm{\rho}(t+1)=(\bm{I}-\epsilon \bm{L})\bm{\rho}(t)$, where $\bm{L} = \bm{D}-\bm{A}$ is the Laplacian matrix corresponding to $\bm{A}$ and $\bm{D}$ is the diagonal matrix where $D_{ii}=\sum_{j\ne i}A_{ij}$. $\bm{L}$ can be decomposed as $\bm{L}=\bm{S}^{-1} \bm{\Lambda} \bm{S}$, where $\bm{\Lambda} = {\rm diag}\{\lambda_1, \lambda_2, \cdots, \lambda_N\}$. In cellular networks like LTE and 5G, resource management tasks—including PRB allocation, CIO tuning, and user handovers—are performed periodically according to the frame structure. It is natural to analyze the load $\rho_i(t)$ in discrete time as $\rho_i(k)$ and $\bm{\rho}(k+1)=(\bm{I}-\epsilon \bm{L})\bm{\rho}(k) = (\bm{I}-\epsilon \bm{L})^k\bm{\rho}(0)$. $(\bm{I}-\epsilon \bm{L})^k=(\bm{S}^{-1}(\bm{I}-\epsilon\bm{\Lambda})\bm{S})^k=(\bm{S}^{-1}(\bm{I}-\epsilon\bm{\Lambda})^k\bm{S})$. The system converges to the balance state requires that $|1-\epsilon \lambda_i|_{i\ge2} < 1$. Eigenvalues of Laplacian matrix have the property that $0=\lambda_1 \le \lambda_2 \le \cdots \lambda_N$. Therefore, this requires that $0<\epsilon < \frac{2}{\lambda_{\rm max}}$.
\end{proof}

$|1-\epsilon\lambda_i|$ decides the convergence speed of the system. Let's define
\[\eta = \arg\max_{i\ge2}|1-\epsilon\lambda_i|\]

\begin{figure}[t]
    \centering
    \resizebox*{8.5cm}{!}{\includegraphics{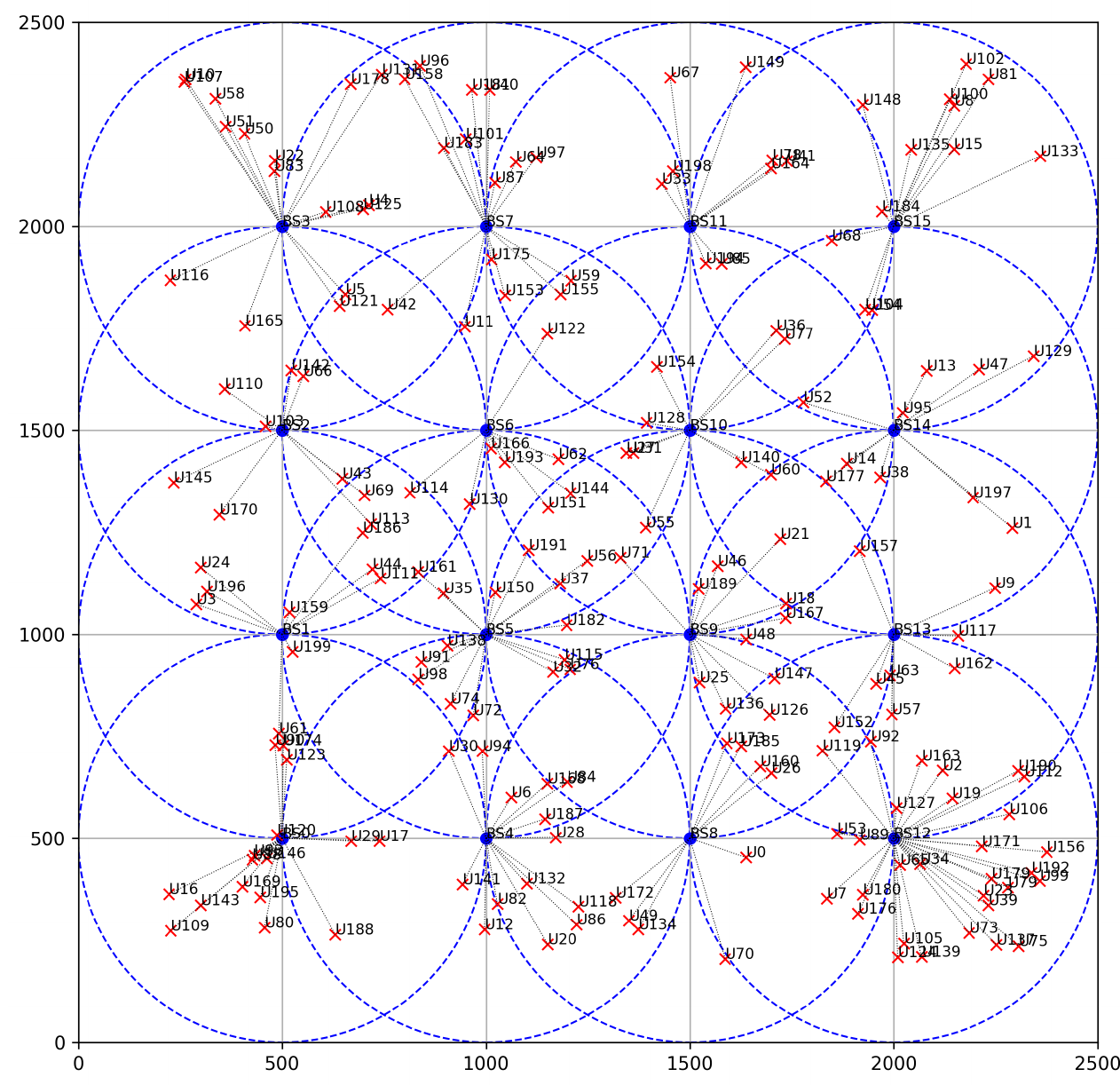}}
    \caption{Simulation of BSs and users distribution. Blue nodes represent BSs and red nodes are users.}
    \label{fig: BSs distribution}
\end{figure}


\subsubsection{Non-conservative load transfer}
If a user $v$ is handed over from BS $j$ to BS $i$, the perturbation of load in BS $i$ is $\Delta \rho_i=\frac{B_{v,i}(t)}{B_i(t)}$ and the perturbation in BS $j$ is $\Delta \rho_j = \frac{B_{v,j}(t)}{B_j(t)}$. Normally, the perturbation of load in BS $i$ and $j$ caused by user $v$ move from $j$ to $i$ is different since ${\rm SINR}_{v,i}^t$ and ${\rm SINR}_{v,j}^t$ may be different as well as $B_i(t)$ and $B_{j}(t)$. So, we can not simply assume the load share dynamics is $\epsilon(\rho_j-\rho_i)$. The dynamics of the load is 

\begin{equation}\label{equ: non-conservative}
    \rho_i(k+1) = \rho_i(k)+\sum_{j=1, j \ne i}\epsilon_{ij}(k)a_{ij}(\rho_j(k)-\rho_i(k)) 
\end{equation}
$\epsilon_{ij}(k)$ maybe not equal to $\epsilon_{ji}(k)$, $\epsilon_{ij}(k) < 1$.

To analyse how the non-conservative load transfer affects the convergence of the load balance evolution, rewrite equation~(\ref{equ: non-conservative}) as
\begin{equation} \label{equ: unbalance}
    \rho_i(k+1) = \rho_i(k)+\sum_{j=1, j \ne i}\tilde{\epsilon}_{ij}(k)A_{ij}(\rho_j(k)-\rho_i(k)) + \omega_i(k), 
\end{equation}
where $\tilde{\epsilon}_{ij}(k)=\frac{\epsilon_{ij}(k)+\epsilon_{ji}(k)}{2}$, 
\[\omega_i(k)=\sum_{j=1, j \ne i}\frac{\epsilon_{ij}(k)-\epsilon_{ji}(k)}{2}A_{ij}(\rho_j(k)-\rho_i(k))\]

Rewirte equation~(\ref{equ: unbalance}) in a matrix form, we can obtain
\begin{equation}
    \bm{\rho}(k+1) = (\bm{I}-\hat{\bm{L}}(k))\bm{\rho}(k)+\bm{\omega}(k),
\end{equation}
where $\hat{L}_{ij} = -\tilde{\epsilon}_{ij}(k), \forall i\ne j$ and $\hat{L}_{ii}=\sum_{j=1}\tilde{\epsilon}_{ij}(k)$. $\hat{\bm{L}}(k)$ is symmetric matrix. The error between $\rho_i(k)$ and $\bar{\rho}(k)$ is defined as $e_{i}(k) = \rho_i(k)-\bar{\rho}(k)$. $\bm{e}(k) \in \mathbb{R}^{N*1}$ is the  vector of $e_i(k)$. 
\[ \bm{e} = \bm{\rho}(k)- \bar{\bm{\rho}}(k) \]
, where $\bar{\bm{\rho}}(k) = \bar{\rho}\bm{1}$, $\bm{1} \in \mathbb{R}^{N*1}$ is a 1 vector, $\bar{\rho} = \frac{1}{N}\bm{1}^{\top}\bm{\rho}$.
\[\bm{e}(k+1) = \bm{\rho}(k+1)-\bar{\bm{\rho}}(k+1)\]
\[\bm{e}(k+1) =(\bm{I}-\hat{\bm{L}(k)})\bm{\rho}(k) +\bm{\omega}(k) - (\frac{1}{N}\bm{1}^{\top}((\bm{I}-\hat{\bm{L}}(k))\bm{\rho}(k)+\bm{\omega}(k)))\bm{1}\]

\[\bar{\rho}(k+1) = (\frac{1}{N}\bm{1}^{\top}((\bm{I}-\hat{\bm{L}}(k))\bm{\rho}(k)+\bm{\omega}(k)))
\]

\[\bm{1}^{\top}(\bm{I}-\hat{\bm{L}}(k)) = \bm{1}^{\top}\] since $\bm{1}^{\top}\hat{\bm{L}}(k)=0$. Then
\[\bar{\rho}(k+1) = \frac{1}{N}\bm{1}^{\top}(\bm{\rho}(k)+\bm{\omega}(k)) = \bar{\rho}(k)+\bar{\omega}(k),\]
where $\bar{\omega}(k) = \frac{1}{N}\bm{1}^{\top}\bm{\omega}(k)$.

So we have
\[\bm{e}(k+1) = (\bm{I}-\hat{\bm{L}}(k))\bm{\rho}(k)+\bm{\omega}(k) - (\bar{\bm{\rho}}(k)+\bar{\bm{\omega}}(k)) \],
$\bar{\bm{\omega}}(k)=\bar{\omega}\bm{1}$

Since $\bm{e}(k) = \bm{\rho}(k) - \bar{\bm{\rho}}(k)$, $\hat{\bm{L}}(k) \bar{\bm{\rho}}(k)=0$, we have
\[(\bm{I}-\hat{\bm{L}}(k))(\bm{\rho}(k) - \bar{\bm{\rho}}(k)) = (\bm{I}-\hat{\bm{L}}(k))\bm{\rho}(k) - \bar{\bm{\rho}}(k)\]
Therefore
\[\bm{e}(k+1)=(\bm{I}-\hat{\bm{L}}(k))(\bm{\rho}(k)-\bar{\bm{\rho}}(k))+\bm{\omega}(k) - \bar{\bm{\omega}}(k) \]
The evolution of error is 
\begin{equation}
    \bm{e}(k+1)=(\bm{I}-\hat{\bm{L}}(k))\bm{e}(k)+\bm{\omega}(k) - \bar{\bm{\omega}}(k).
\end{equation}

Here we define the Lyapunov function $V(\bm{e}(k)):= \bm{e}(k)^{\top}\bm{e}(k)$.


\begin{equation}\label{equ: error}
    \begin{split}
        &V(\bm{e}(k+1)) = \\
        &((\bm{I}-\hat{\bm{L}}(k))\bm{e}(k)+ \bm{\omega}(k) - \bar{\bm{\omega}}(k))^{\top} \\
& ((\bm{I}-\hat{\bm{L}}(k))\bm{e}(k)+\bm{\omega}(k) - \bar{\bm{\omega}}(k))
    \end{split}
\end{equation}

If the absolute value of perturbation of load in $i$ and $j$ caused by the handover is the same, then $\epsilon_{ij}=\epsilon_{ji}$,  $\omega_i(k)=0$. $V(\bm{e}(k+1)) = \bm{e}(k)^{\top}(\bm{I}-\hat{\bm{L}}(k))^2\bm{e}(k)$. $V(\bm{e}(k+1)) \le {\rm max}(|\gamma_i|)^2V(k) $. When ${\rm max}(|\gamma_i|)<1$, then $V(\bm{e}(k+1)) < V(\bm{e}(k))$, where $\gamma_i$ is the eigenvalue of $\bm{I}-\hat{\bm{L}}(k)$.



Let $\hat{\gamma}(k) = {\rm max}|\gamma_i(k)|$, where $\gamma_i(k)$ is eigenvalue of $\bm{I}-\hat{\bm{L}}(k)$. Let $\alpha(k) = || \bm{\omega}(k)-\bar{\bm{\omega}}(k) ||_2$. According to Cauchy-Schwarz Inequality 

$((\bm{I}-\hat{\bm{L}}(k))\bm{e}(k))^{\top}(\bm{\omega}(k)-\bar{\bm{\omega}}(k)) \le \\ ||(\bm{I}-\hat{\bm{L}}(k))\bm{e}(k))^{\top}||_2||\bm{\omega}(k)-\bar{\bm{\omega}}(k)||_2$

$||(\bm{I}-\hat{\bm{L}}(k))\bm{e}(k))^{\top}||_2 \le \hat{\gamma}(k)||V(\bm{e}(k))||_2$. 

Then we have
\[V(\bm{e}(k+1)) \le \hat{\gamma}(k)^2V(\bm{e}(k)) + \alpha(k)^2 + 2\hat{\gamma}(k)\alpha(k)||V(\bm{e}(k))||_2 \]
\[V(\bm{e}(k+1))\le (\hat{\gamma}(k)||V(\bm{e}(k))||_2+\alpha(k))^2\]

Assume that there exists a bound that $$V(\bm{e}(k)) \le \tilde{V}$$,  which requires $V(\bm{e}(k+1)) \le V(\bm{e}(k))$ whenever $V(\bm{e}(k))\ge \tilde{V}$. 

\[(\tilde{\gamma}(k)||V(\bm{e}(k))||_2+\alpha(k))^2 \le V(\bm{e}(k))\]

\begin{equation}\label{equ: bound}
    \tilde{V} = (\frac{\alpha(k)}{1-\tilde{\gamma}(k)})^2
\end{equation}
From equation~(\ref{equ: bound}), we can see that the bound of oscillation is determined by the load transfer error $\alpha(k)$ and the spectral radius of $\tilde{\gamma}(k)$. If the load transfer is conservative that $\alpha(k)=0$, then $V(\bm{e}(k+1))$ always smaller than $V(\bm{e}(k))$, which indicates that the error is monotonically decreasing and will converge to $0$ finally. Based on equation~(\ref{equ: bound}), we can design the algorithm to reduce the bound by reducing the load transfer error. The key point here to reduce the load transfer error is to construct the preferred users for each BS based on the load perturbation.

\begin{algorithm} 
\caption{Load balance Algorithm}    \begin{algorithmic}\label{algorithm: 1}
    \FOR{$\rho_i(k)$ $\in$ $\bm{\rho}$}
     \IF{$\rho_i(k) > \rho_{\rm{th}} \land \eta_i(k)<0$}
     \STATE Handover in BS $i$ is triggered;
     \FOR{BS $j$ $\in$ $\mathcal{N}_i$}
     \IF{$\rho_j(k)<\rho_{\rm th} \land \eta_j(k)>0$}
     \STATE BS $i$ send handover request to BS $j$;
     \STATE BS $i$ construct the user handover request list $\mathcal{V}_i(k)$; 
        \FOR{user $u$ $\in$ $\mathcal{V}_i(k)$}
            \STATE calculate the difference $|\rho_{u,i}(k)-\rho_{u,j}(k)|$;
        \ENDFOR
     \ENDIF
      \STATE Rank the user in $\mathcal{V}_i(k)$ based on load difference$|\rho_{u,i}(k)-\rho_{u,j}(k)|$ in ascending order as $\mathcal{V}^*_i(k)$;
     \ENDFOR
     \ENDIF
    \ENDFOR

    \WHILE{$\rho_i(k)>\rho_{\rm th} \land \rho_j(k) <\rho_{\rm th}$ or $k<k_{\rm th}$}
    \FOR{Each BS $i$ has a user handover request $\mathcal{V}^*_i(k)$} 
    
    \IF{$\rho_{u,i}(k) \le \eta_i(k)$ $\land$ $\mathcal{V}_i^*(k)$ is not empty}
    \STATE BS $i$ accept the handover request of user $u$ from BS $j$ according to the rank of $\mathcal{V}_i^*(k)$;
    \STATE Remove $u$ from all handover request $\mathcal{V}^*(k)$;
    \STATE Update information of all BS stations $\{\eta_i(k), \rho_i(k), \mathcal{V}_i^*(k) \}$
    \ENDIF
    \ENDFOR
    \ENDWHILE
    \FOR{Each pair of neighbouring BSs}
    \STATE $\theta_{ij} = {\rm maxmize}({\rm SINR}_{u,j}+ Hys - {\rm SINR}_{u,i})$ if user u was handed over from BS $j$ to $i$.
    \ENDFOR
\end{algorithmic}
\end{algorithm}

At the beginning, if BS $i$ is overloaded, $\rho_i(k)>\rho_{\rm th}$, and its load larger than the neighbouring BSs', $\eta_i(k)=\sum_{j=1, j \ne i}\epsilon_{ij}(k)A_{ij}(\rho_j(k)-\rho_i(k))<0$. For the neighbouring BSs of BS $i$, if $\rho_j(k)<\rho_{\rm th}$ and has the accommodation for load, $\eta_i >0$, then BS $i$ will send a handover request to BS $j$. The list of handover request of BS $i$ from neighbouring BSs is $\mathcal{V}_i$ is ranked in ascending order according to the difference of load $|\rho_{u,i}(k)-\rho_{u, j}(k)|$ where $\rho_{u,i}(k) = \frac{B_{u,i}(k)}{B_i}$, $\rho_{u, j}=\frac{B_{u,j}(k)}{B_j}$. $B_{u,i}(k)$ is the PRB requirement of user $u$ if $u$ is handed over to the BS $i$ and $B_{u,j}(k)$ is the PRB requirement of user $u$  BS $j$. The steps of the algorithm is shown in Algorithm~\ref{algorithm: 1}. 

In the above analysis, we established dynamic models for different scenarios, including both discrete and continuous forms, with or without considering the PRB scheduling mechanism within the base station. Non-conservative load transfer can prevent the system from synchronising to an ideal state. However, the presence of the PRB scheduling mechanism within the base station can help mitigate this issue, allowing the system to still converge to a desirable equilibrium.

\begin{figure}[ht]
  \centering

  \subfigure[]{
    \includegraphics[width=0.8\linewidth]{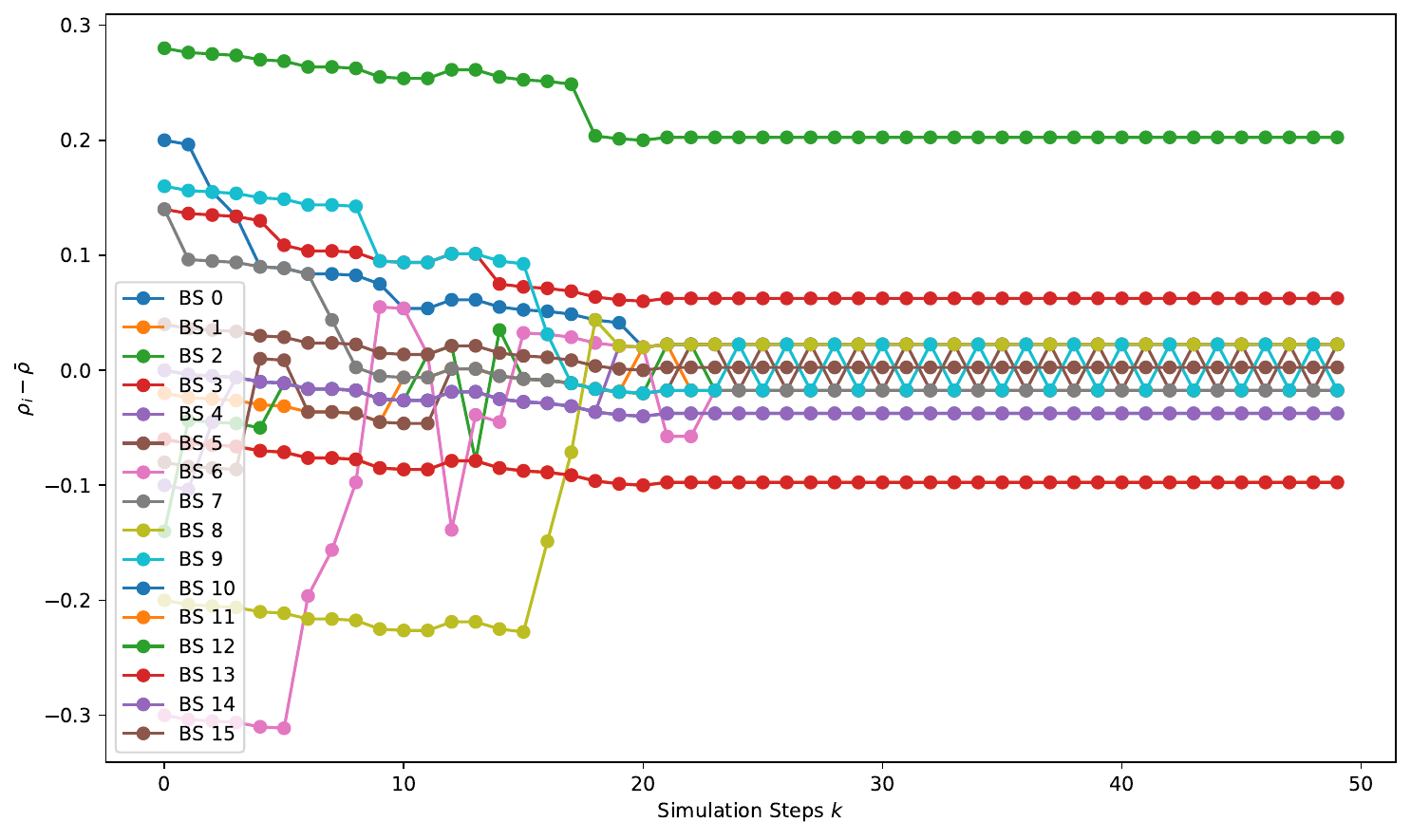}
  }
  
  \subfigure[]{
    \includegraphics[width=0.8\linewidth]{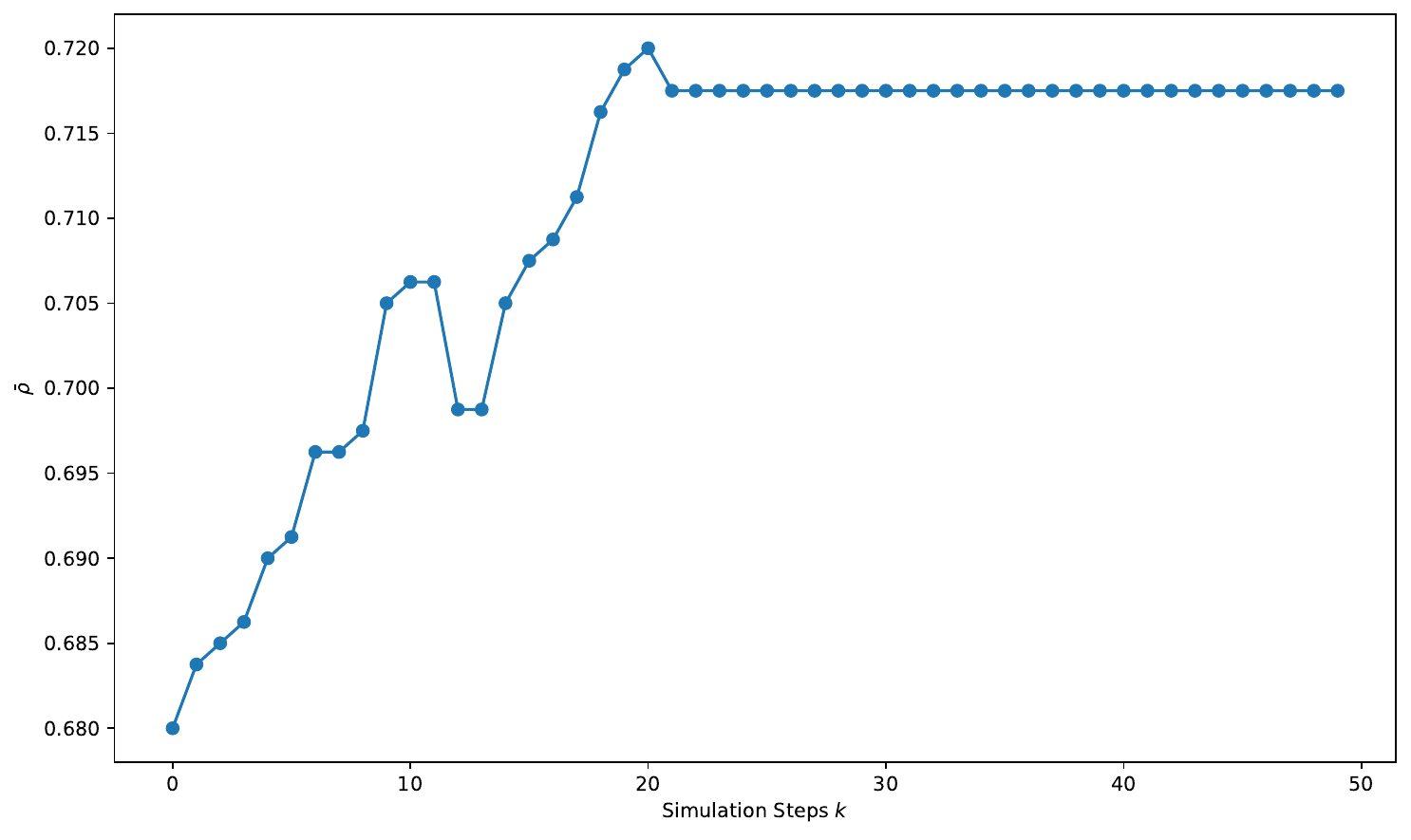}
  }

  \caption{Load balance based on load difference between BSs.}
  \label{fig: average load}
\end{figure}

\begin{table}[h]
\centering
\caption{}\label{tab: 1}
\begin{tabular}{ccccc} 
\toprule
Users & Average load & Algorithm 1 & Algorithm 2 & Algorithm 3 \\
\midrule
200 & 0.2712 &0.0842 & 0.0926 & 0.0886 \\
250 & 0.3387 &0.0917 & 0.0987 & 0.0953 \\
300 & 0.4087 &0.0727 & 0.0783 & 0.0778 \\
350 & 0.4750  &0.0607 & 0.0615 & 0.0569 \\
400 & 0.5412 &0.0449 & 0.0460 & 0.0494 \\
500 & 0.6799 &0.0574 & 0.0589 & 0.0650 \\
\bottomrule
\end{tabular}
\end{table}

\begin{table}[h]
\centering
\caption{}\label{tab: 2}
\begin{tabular}{ccccc} 
\toprule
Users & Average load & Algorithm 1 & Algorithm 2 & Algorithm 3 \\
\midrule
200 & 0.2712 &0.0842 & 0.0926 & 0.0886 \\
250 & 0.3387 &0.0917 & 0.0987 & 0.0953 \\
300 & 0.4087 &0.0619 & 0.0699 & 0.0667 \\
350 & 0.4750  &0.0541 & 0.0618 & 0.0569 \\
400 & 0.5412 &0.0449 & 0.0399 & 0.0494 \\
500 & 0.6799 &0.0574 & 0.0615 & 0.0589 \\
\bottomrule
\end{tabular}
\end{table}

\begin{table}[h]
\centering
\caption{}\label{tab: 3}
\begin{tabular}{ccccc} 
\toprule
Users & Average load & Algorithm 1 & Algorithm 2 & Algorithm 3 \\
\midrule
200 & 0.2712 &0.0818 & 0.0926 & 0.0886 \\
250 & 0.3387 &0.0917 & 0.0925 & 0.0953 \\
300 & 0.4087 &0.0595 & 0.0659 & 0.0799 \\
350 & 0.4750 &0.0533 & 0.0597 & 0.0556 \\
400 & 0.5412 &0.0413 & 0.0424 & 0.0442 \\
500 & 0.6799 &0.0537 & 0.1048 & 0.0589 \\
\bottomrule
\end{tabular}
\end{table}

\begin{figure*}[ht]
  \centering

  \subfigure[]{
    \includegraphics[width=0.3\linewidth]{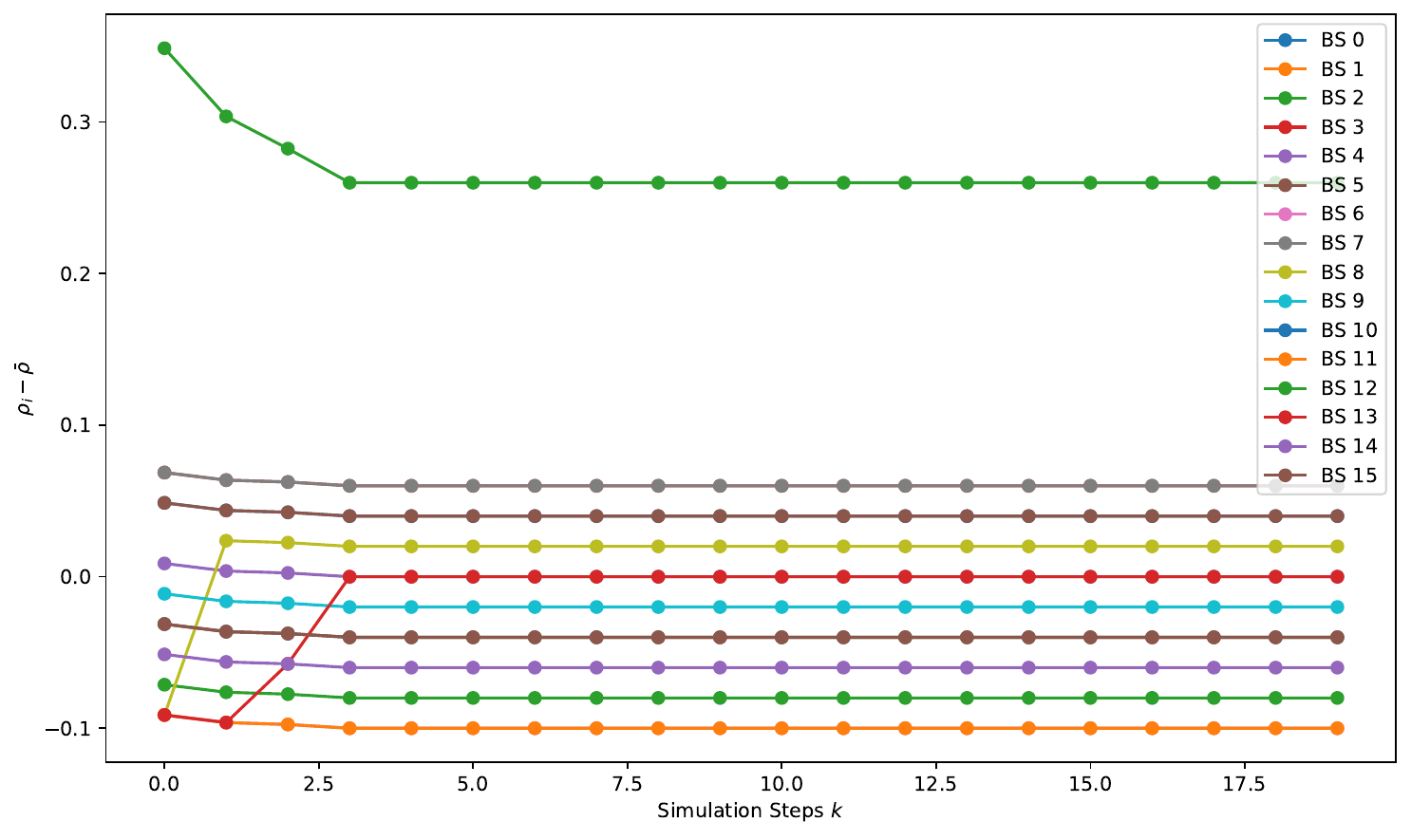}
  }
  \subfigure[]{
    \includegraphics[width=0.3\linewidth]{
    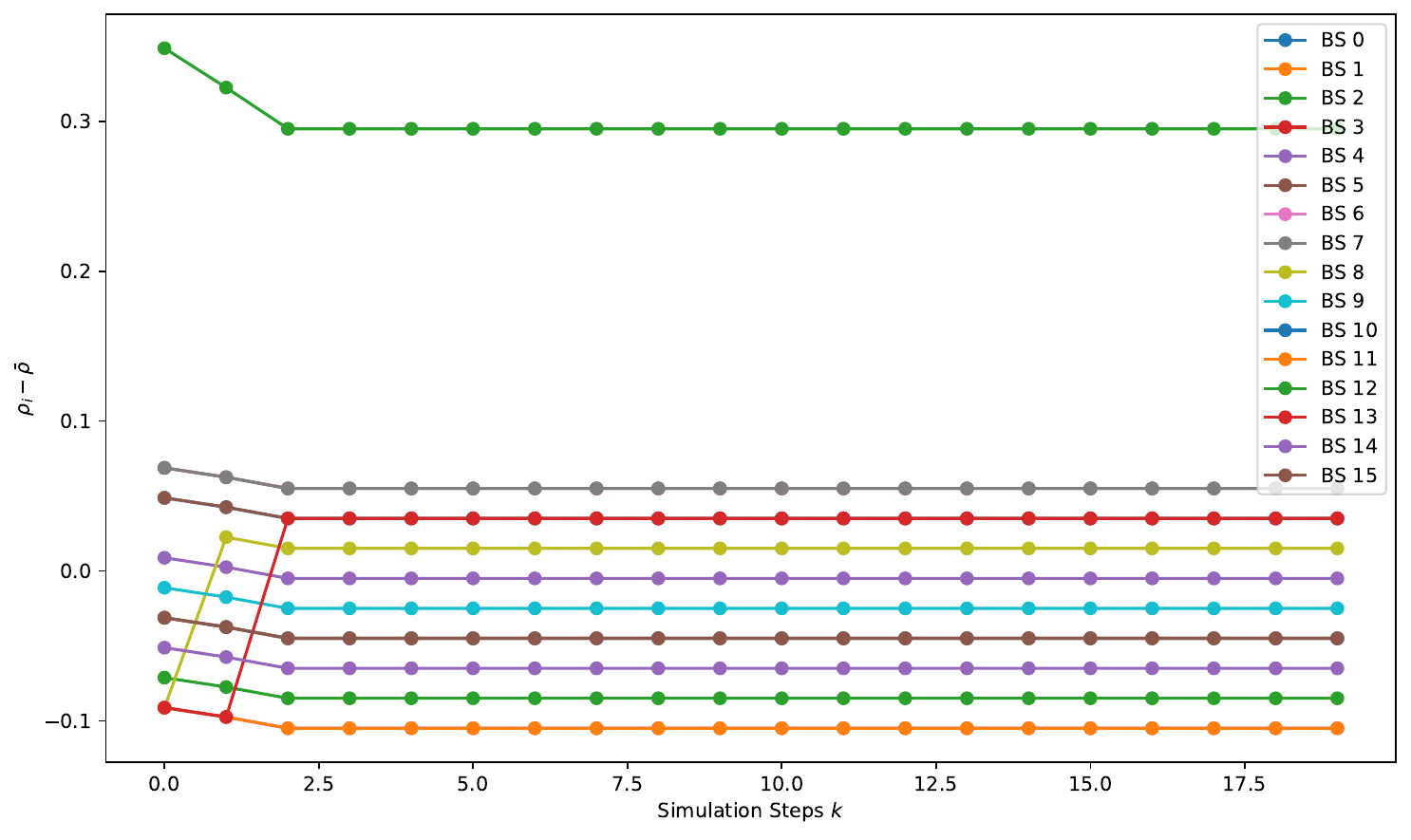}
  }
  \subfigure[]{
    \includegraphics[width=0.3\linewidth]{
    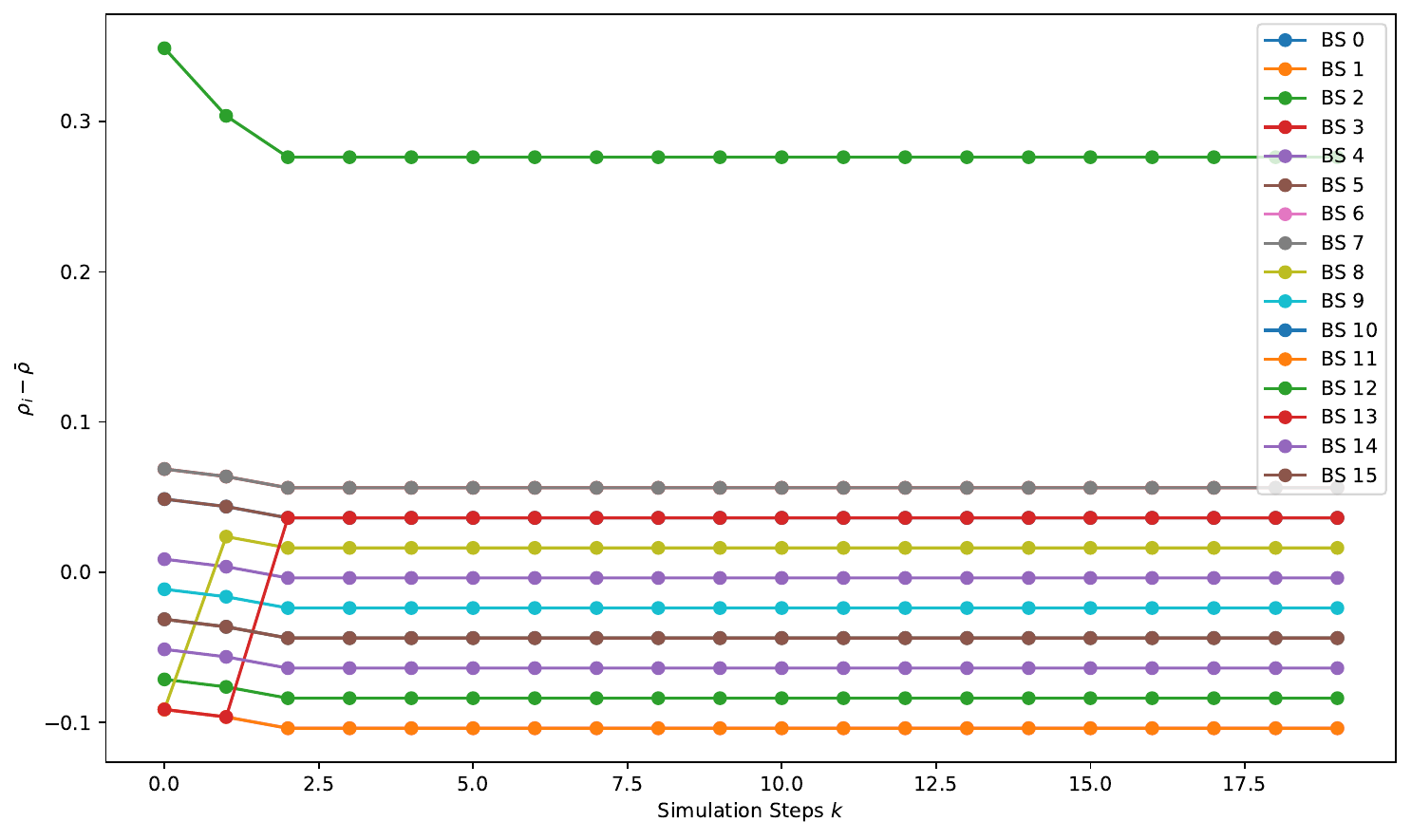}
  }

\subfigure[]{
    \includegraphics[width=0.3\linewidth]{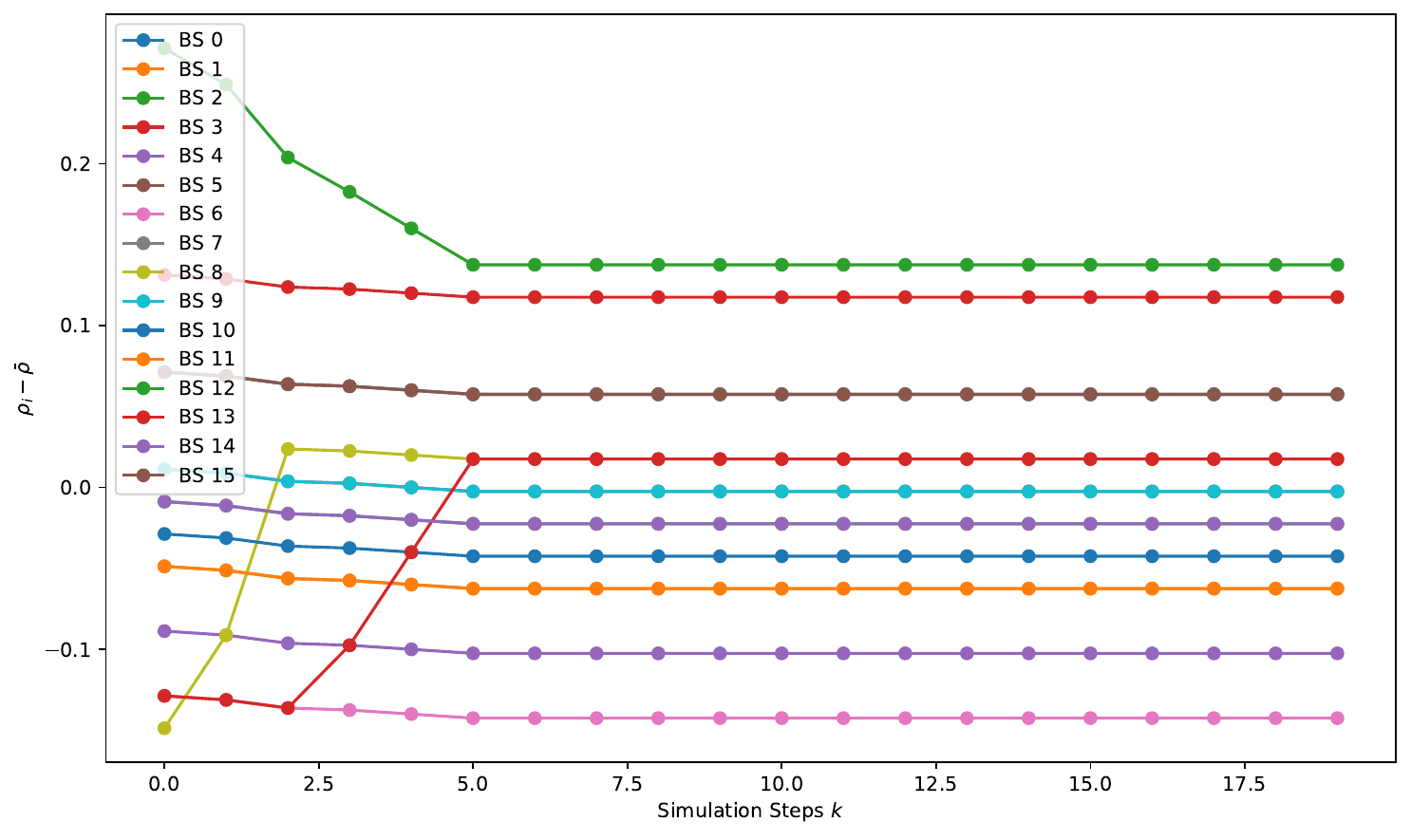}
  }
  \subfigure[]{
    \includegraphics[width=0.3\linewidth]{
    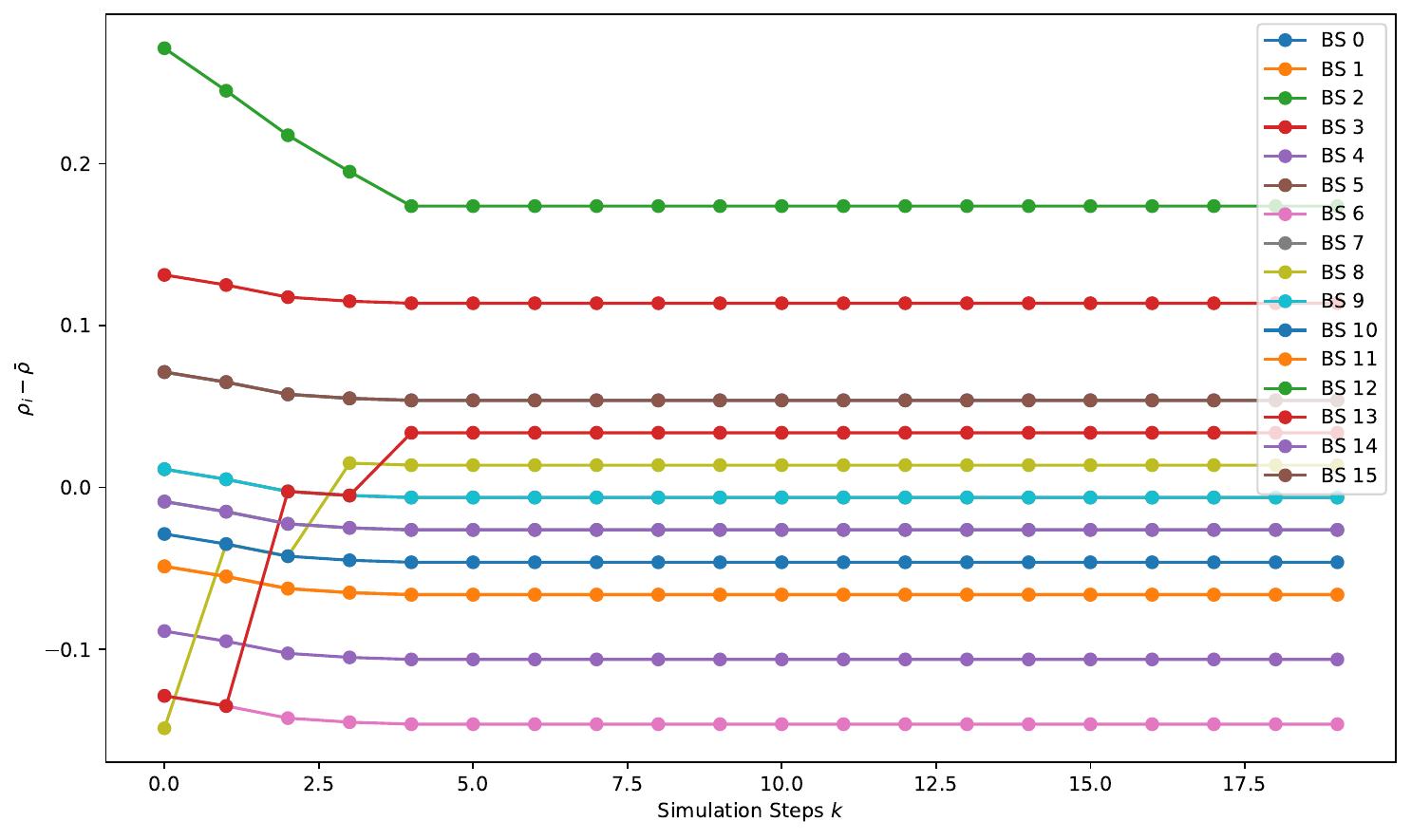}
  }
  \subfigure[]{
    \includegraphics[width=0.3\linewidth]{
    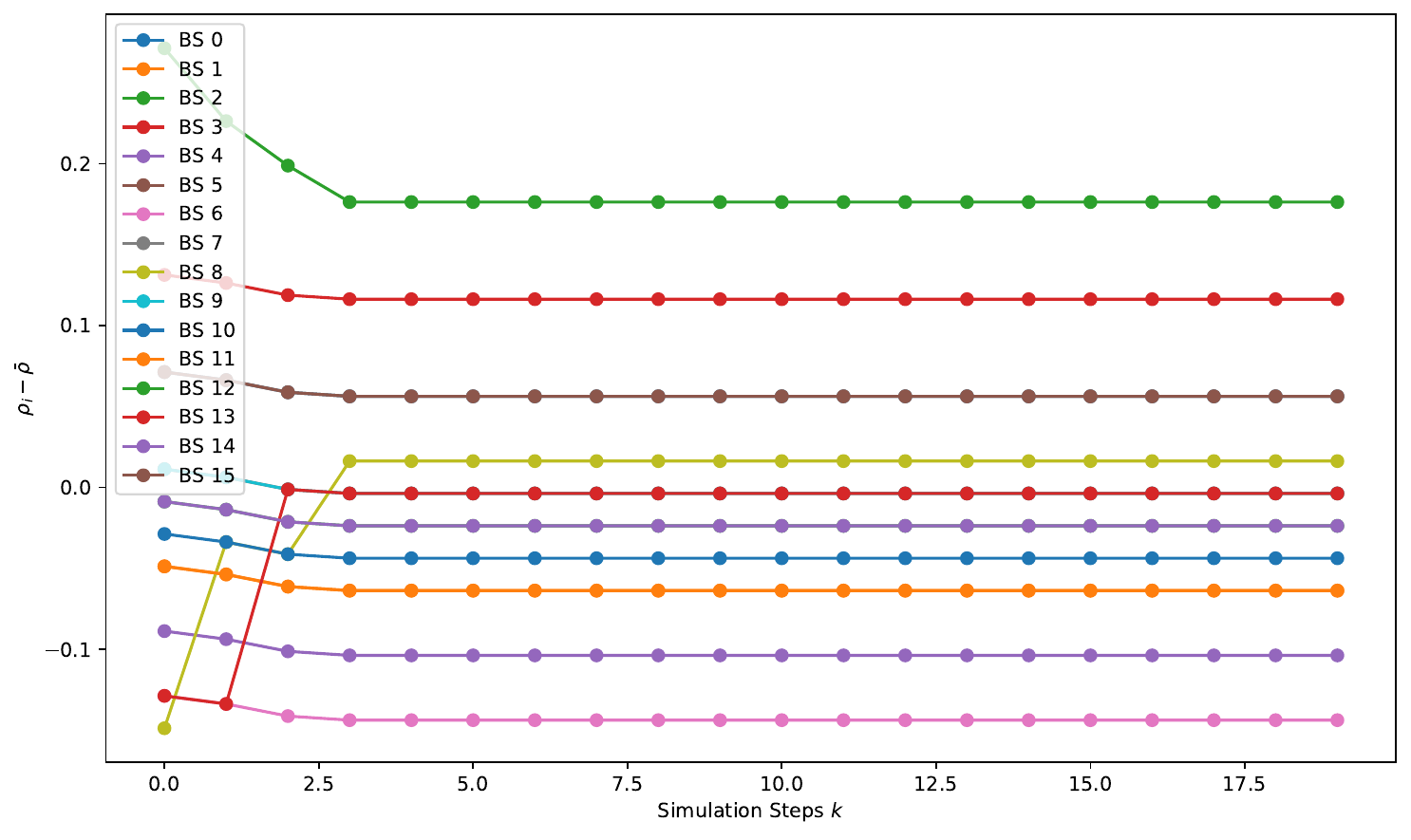}
  }

  \subfigure[]{
    \includegraphics[width=0.3\linewidth]{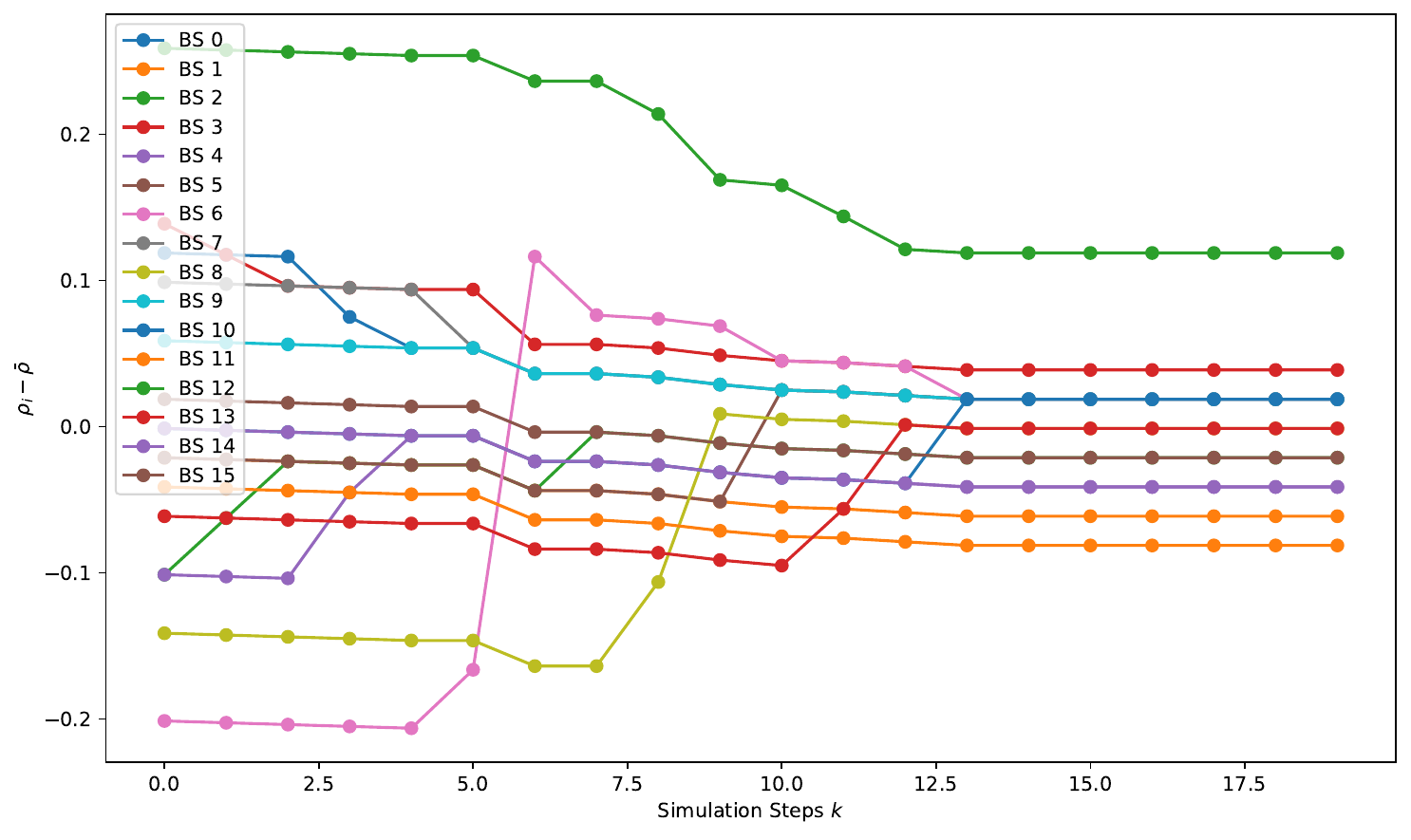}
  }
  \subfigure[]{
    \includegraphics[width=0.3\linewidth]{
    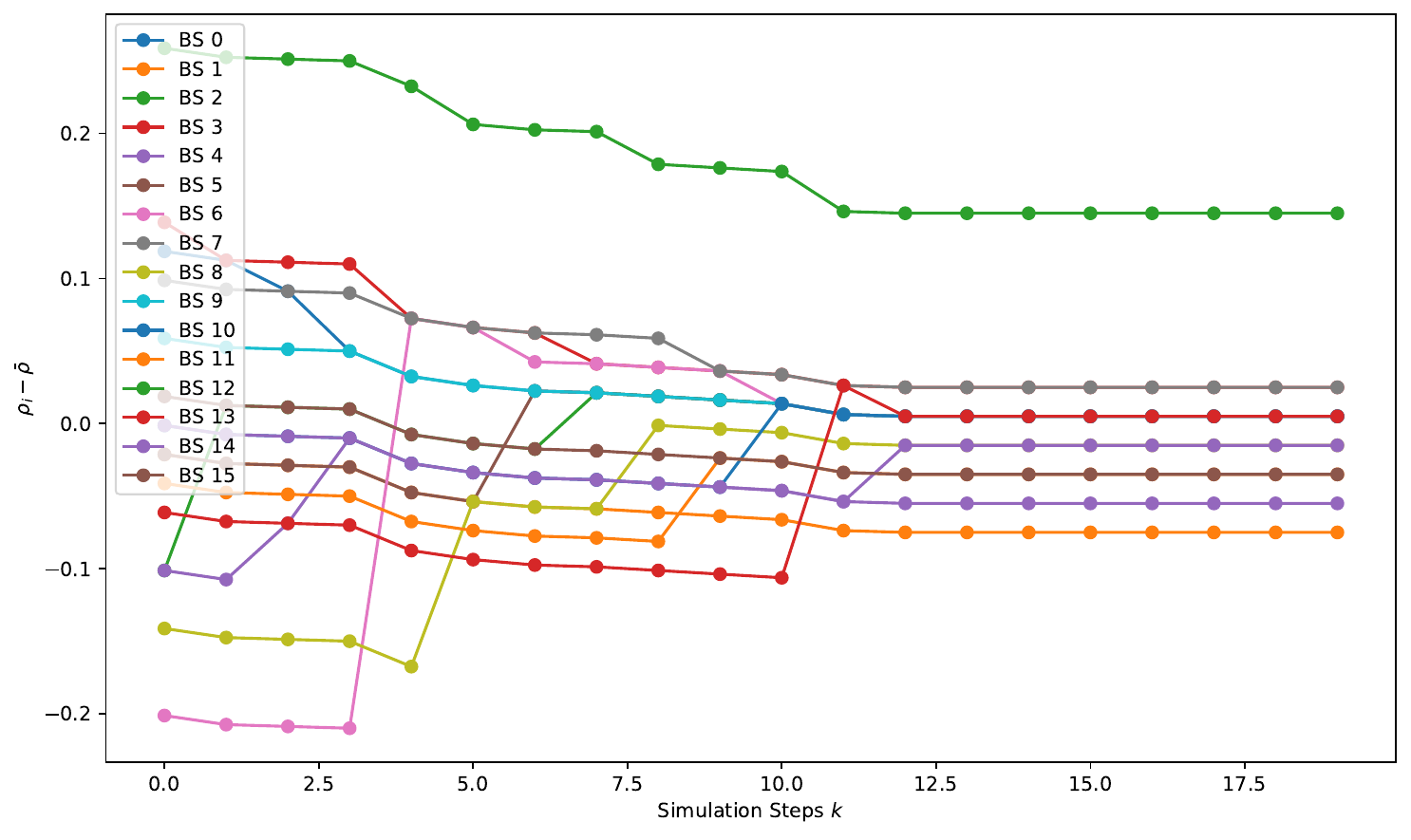}
  }
  \subfigure[]{
    \includegraphics[width=0.3\linewidth]{
    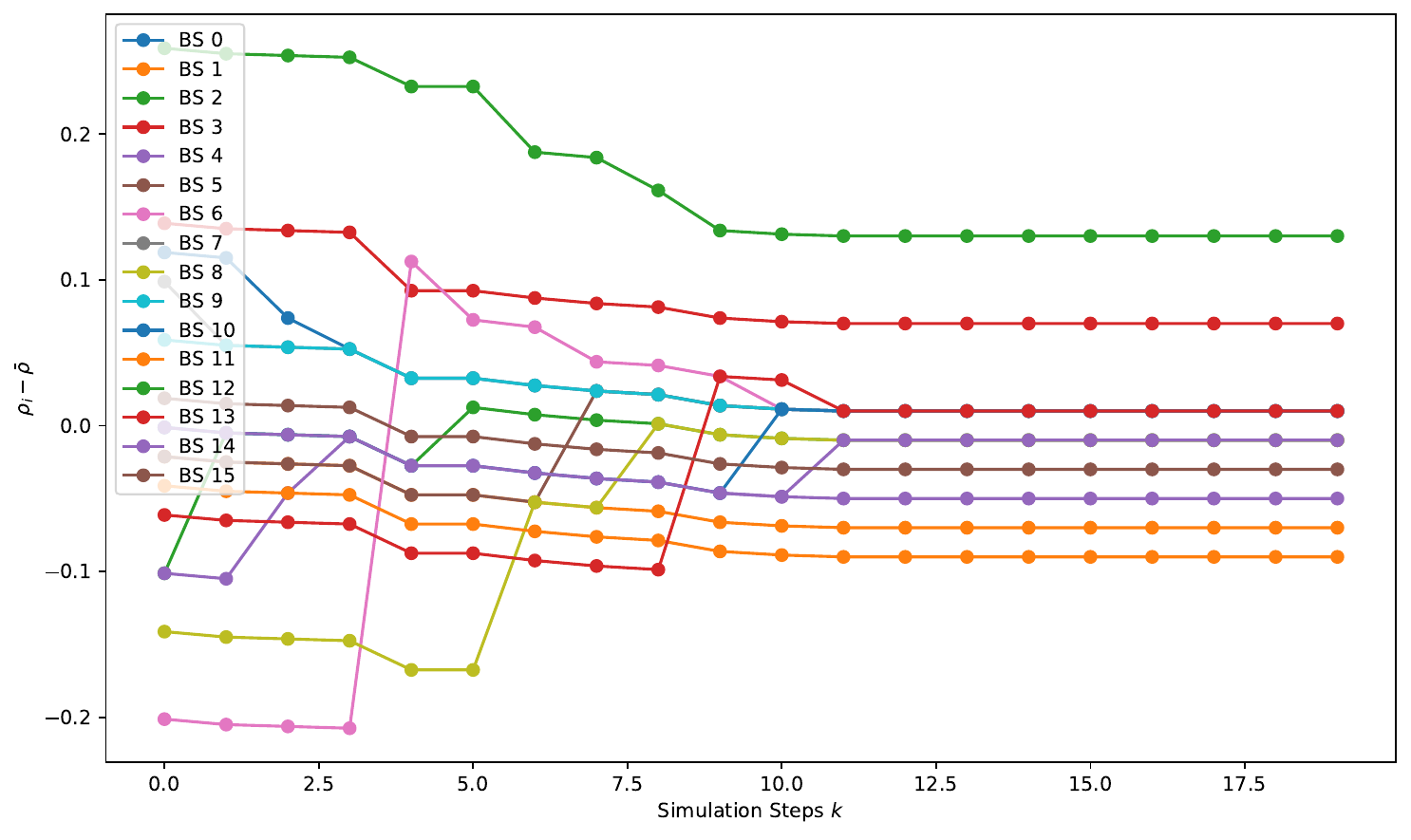}
  }
  
  \caption{Comparison of Load balance three different algorithms with different number of users. (a) (b) (c) with 200 users, (d) (e) (f) with 300 users, (g) (h) (i) with 400 users. (a) (d) (g) are generated by the first algorithm, (b) (e) (h) are generated by the second algorithm, and (c) (f) (i) are generated by the third algorithm.}
  \label{fig: comparison}
\end{figure*}

\begin{figure*}[ht]
  \centering
  \subfigure[]{
    \includegraphics[width=0.3\linewidth]{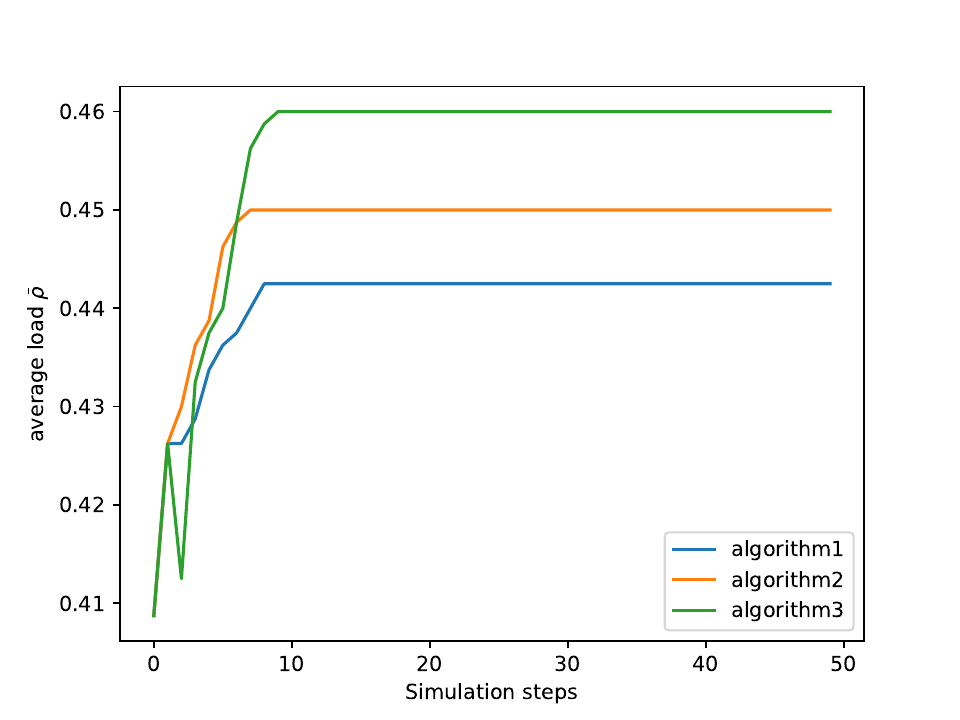}
    }
  \subfigure[]{
    \includegraphics[width=0.3\linewidth]{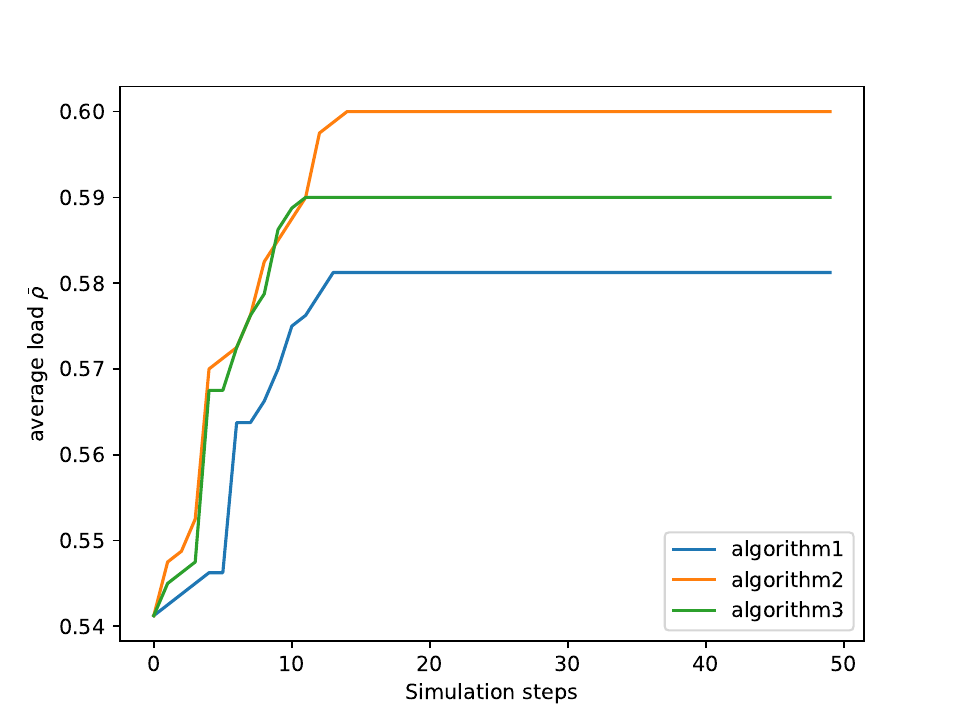}
   }
  \subfigure[]{
    \includegraphics[width=0.3\linewidth]{
    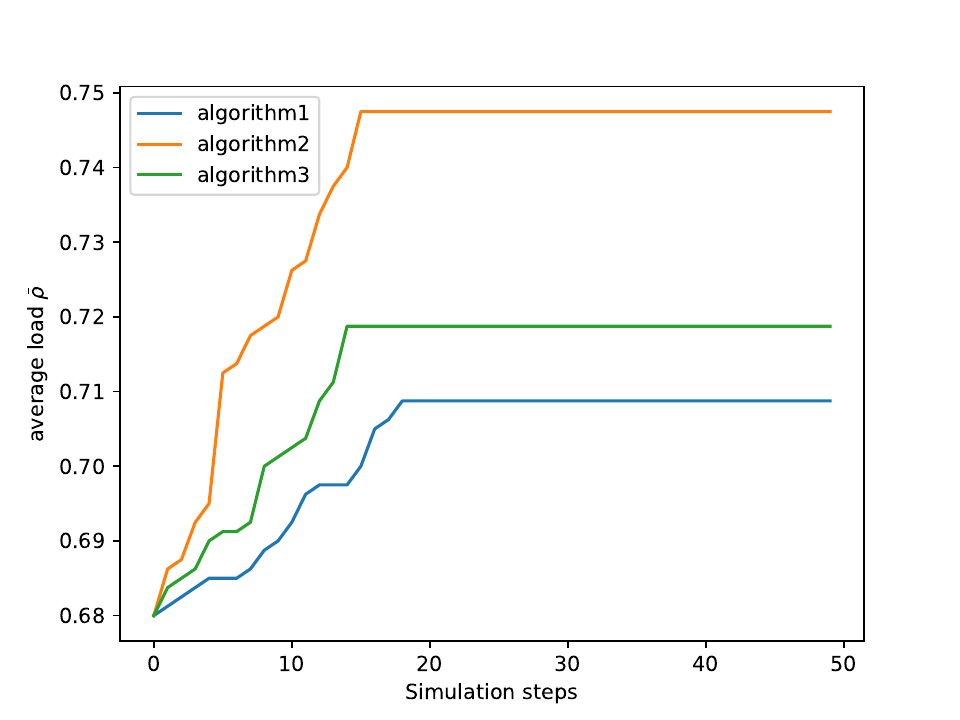}
  }
  \caption{Average load evolution with different numbers of users $300, 400, 500$ in (a) (b) and (c).}
  \label{fig: average_load_evolution}
\end{figure*}

\section{Results}
In this section, we validate our proposed analysis framework by simulation. We configure the system parameters according to \cite{park2017mobility, access2018evolved}. The bandwidth of a PRB is set as 180 KHz, each BS has 10 MHz of bandwidth. We set the inter-BS distance to 500 m. The transmit power of a BS is set to 46 dBm. The antenna gain of a BS is 14 dBi, and receiver gain is 5 dBi. The amount of the path loss from a BS to a user is modeled as $128.1+37.6 \log({\rm max}(d, 0.035))$, where $d$ is the distance between BS and user in km. The load balancing simulation experiment is conducted between 16 BSs shown in Fig.~\ref{fig: BSs distribution}.

In the first case, it shows what will happen if the CIO adaptively adjust only based on the load difference between neighbouring BSs to achieve load balance by a greedy algorithm. In Fig.~\ref{fig: average load}(a), the load difference between BSs gradually shrinks. However, loads of some BSs oscillate around a value. This means that some BSs keep handing over to each other without improving the performance of the system. Loads of BSs cannot synchronize to the same state. In Fig.~\ref{fig: average load}(b), we can see that the average loads of these BSs keep growing before 20 simulation steps. This is caused by the non-conservative load transfer between BSs. Even though after 20 simulation steps, the average load of the system remains the same, loads of some BSs still oscillate. This phenomenon validates our theory that loads of BSs cannot synchronise to the same state because of non-conservative load transfer. Therefore, instead of synchronising to the same state, the load balancing algorithm should aim to mitigate the load difference between BSs and stop handover if the load difference is smaller than a threshold.

Now we test the proposed handover algorithm with 3 different user selection policies. The first algorithm is shown in Algorithm~(\ref{algorithm: 1}). The second algorithm ranks the preference of handover user based on the load of users in the current BS, i.e. $\rho_{u, i}$ if user $u$ currently connects to BS $i$. The third algorithm ranks the preference of handover user based on the load of users in potential BS, i.e. $\rho_{u, j}$ if user $u$ service will be transferred from BS $i$ to $j$. These three algorithms are tested in three scenarios with different numbers of users. The load transfer accommodation is set as $|\eta_i(k)|\le 0.25*\sum A_{ij}|\rho_j(k)-\rho_i(k)|$ in these three scenarios. In the first case, the number of users is set as 200. The average load is $0.27125$. The threshold is set as $\rho_{\rm th}=0.5$. Only BS 12 is overloaded.
In the second case, number of users is 300. The average load is $0.40875$. The threshold is set as $\rho_{\rm th}=0.5$. BS $3$ and BS $12$ are overloaded.
In the third case, number of users is 400. The average load is $0.54125$. The threshold is set as 0.6. BS $0, 3, 7, 12$ are overloaded.
The comparison of different three algorithms which evolve with different simulation steps are shown in Fig.~(\ref{fig: comparison}). 
The initial average load and standard deviation (sd) is shown in Table~\ref{tab: 1}. The average load that evolves with algorithms in the scenario with $300, 400, 500$ users is shown in Fig.~(\ref{fig: average_load_evolution}). Since according to our analysis, loads of BSs cannot synchronise to the same state with the existence of non-conservative load transfer, the mechanism of load transfer accommodation should be set to prevent the endless handover to achieve the same state.  With the mechanism of load transfer accommodation, the endless handover has been eliminated (compare Fig.~\ref{fig: average load} and Fig.~\ref{fig: comparison}). From the comparison, it shows that the load balancing algorithm 1 is better than the other 2 algorithms in most situations in average load and standard deviation of load. Now we test the algorithms with different load transfer accommodations. The other settings are the same with the above experiment. The load transfer accommodation is set as $|\eta_i(k)|\le 0.3*\sum A_{ij}|\rho_j(k)-\rho_i(k)|$. The results are shown in Table~\ref{tab: 2}. And the load transfer accommodation is set as $|\eta_i(k)|\le 0.4*\sum A_{ij}|\rho_j(k)-\rho_i(k)|$. The results are shown in Table~\ref{tab: 3}. Algorithm 1 is still better than the other 2 algorithms in most situations. These experiments have demonstrated the effectiveness of our algorithm in evenly distributing the load with fewer standard deviations and an average value.  We can notice that in some special cases, algorithm 1 may not have the best performance, because the existence of the setting of load transfer accommodation. If the setting of the load transfer accommodation is too small, the handover number may be not enough to evenly distribute the load, which means the load still not converges to the bound. With the increase of the load transfer accommodation, the performance of algorithm 1 is better, while the other 2 algorithms' performance does not always improve.

\section{Conclusion}
We observe the phenomenon that the current load balancing algorithm effectively reduces the load difference among BSs, however, the system does not always fully converge to an ideal state and instead continues to oscillate. The oscillation may cause the ping-pong effect without improving the performance of the system.  To explain the potential
reasons behind this phenomenon and provide a pathway to improve the load balancing algorithm, we establish a networked dynamics model of load evolution in load balancing process, which is based on the load balancing mechanism and consistent with the existing load balancing algorithms. The dynamics model enables us to analyse the evolution behaviour of load dynamics in the long term.
Based on the established dynamics model, we established the stability criterion of the load balancing process and found that the existence of non-conservative load transfer causes the oscillation. The presence of the PRB scheduling mechanism within the base station can help mitigate this issue, allowing
the system to still converge to a desirable equilibrium. However, most current load balancing algorithms mainly consider the inter-BS handover algorithm. The condition for the existence of oscillation bound has been established as well as the estimation of this bound if the non-conservative load transfer exists and the stability cannot be guaranteed. We found that he bound of the oscillation is decided by the non-conservation error, eigenvalue spectral and the amount of transfer load. Based on this, we designed an appropriate strategy to alleviate the bound to improve the performance based on current load balancing algorithms. We validate our theory and the proposed algorithm based on the theory by simulations with different settings. These experiments have demonstrated the effectiveness of our algorithm in evenly distributing the load with fewer standard deviations and an average value. And the load transfer accommodation mechanism in the algorithm can effectively prevent the endless handover. Consider the impact of random walk of users on the load balancing process, in the future, we expect to establish a stochastic differential equation to analyse the evolution of load.

\bibliographystyle{IEEEtran}
\bibliography{ref}

\begin{thebibliography}{10}
\providecommand{\url}[1]{#1}
\csname url@samestyle\endcsname
\providecommand{\newblock}{\relax}
\providecommand{\bibinfo}[2]{#2}
\providecommand{\BIBentrySTDinterwordspacing}{\spaceskip=0pt\relax}
\providecommand{\BIBentryALTinterwordstretchfactor}{4}
\providecommand{\BIBentryALTinterwordspacing}{\spaceskip=\fontdimen2\font plus
\BIBentryALTinterwordstretchfactor\fontdimen3\font minus \fontdimen4\font\relax}
\providecommand{\BIBforeignlanguage}[2]{{%
\expandafter\ifx\csname l@#1\endcsname\relax
\typeout{** WARNING: IEEEtran.bst: No hyphenation pattern has been}%
\typeout{** loaded for the language `#1'. Using the pattern for}%
\typeout{** the default language instead.}%
\else
\language=\csname l@#1\endcsname
\fi
#2}}
\providecommand{\BIBdecl}{\relax}
\BIBdecl

\bibitem{li2018ultra}
Y.~Li, Y.~Zhang, K.~Luo, T.~Jiang, Z.~Li, and W.~Peng, ``Ultra-dense hetnets meet big data: Green frameworks, techniques, and approaches,'' \emph{IEEE Communications Magazine}, vol.~56, no.~6, pp. 56--63, 2018.

\bibitem{shao2024access}
S.~Shao, L.~Zhang, and F.~Qi, ``Access optimization in 802.11 ax wlan for load balancing and competition avoidance of iptv traffic,'' \emph{IEEE Transactions on Broadcasting}, vol.~70, no.~2, pp. 401--412, 2024.

\bibitem{guo2013dynamic}
W.~Guo and T.~O'Farrell, ``Dynamic cell expansion with self-organizing cooperation,'' \emph{IEEE Journal on Selected Areas in Communications}, vol.~31, no.~5, pp. 851--860, 2013.

\bibitem{fehske2013concurrent}
A.~J. Fehske, H.~Klessig, J.~Voigt, and G.~P. Fettweis, ``Concurrent load-aware adjustment of user association and antenna tilts in self-organizing radio networks,'' \emph{IEEE Transactions on Vehicular Technology}, vol.~62, no.~5, pp. 1974--1988, 2013.

\bibitem{siomina2004optimization}
I.~Siomina and D.~Yuan, ``Optimization of pilot power for load balancing in wcdma networks,'' in \emph{IEEE Global Telecommunications Conference, 2004. GLOBECOM'04.}, vol.~6.\hskip 1em plus 0.5em minus 0.4em\relax IEEE, 2004, pp. 3872--3876.

\bibitem{kwan2010mobility}
R.~Kwan, R.~Arnott, R.~Paterson, R.~Trivisonno, and M.~Kubota, ``On mobility load balancing for lte systems,'' in \emph{2010 IEEE 72nd vehicular technology conference-fall}.\hskip 1em plus 0.5em minus 0.4em\relax IEEE, 2010, pp. 1--5.

\bibitem{hasan2018adaptive}
M.~M. Hasan, S.~Kwon, and J.-H. Na, ``Adaptive mobility load balancing algorithm for lte small-cell networks,'' \emph{IEEE transactions on wireless communications}, vol.~17, no.~4, pp. 2205--2217, 2018.

\bibitem{yang2012high}
Y.~Yang, P.~Li, X.~Chen, and W.~Wang, ``A high-efficient algorithm of mobile load balancing in lte system,'' in \emph{2012 IEEE Vehicular Technology Conference (VTC Fall)}.\hskip 1em plus 0.5em minus 0.4em\relax IEEE, 2012, pp. 1--5.

\bibitem{park2017mobility}
J.~Park, Y.~Kim, and J.-R. Lee, ``Mobility load balancing method for self-organizing wireless networks inspired by synchronization and matching with preferences,'' \emph{IEEE Transactions on Vehicular Technology}, vol.~67, no.~3, pp. 2594--2606, 2017.

\bibitem{wu2023reinforcement}
D.~Wu, J.~Li, A.~Ferini, Y.~T. Xu, M.~Jenkin, S.~Jang, X.~Liu, and G.~Dudek, ``Reinforcement learning for communication load balancing: approaches and challenges,'' \emph{Frontiers in Computer Science}, vol.~5, p. 1156064, 2023.

\bibitem{xu2019load}
Y.~Xu, W.~Xu, Z.~Wang, J.~Lin, and S.~Cui, ``Load balancing for ultradense networks: A deep reinforcement learning-based approach,'' \emph{IEEE Internet of Things Journal}, vol.~6, no.~6, pp. 9399--9412, 2019.

\bibitem{alsuhli2021mobility}
G.~Alsuhli, K.~Banawan, K.~Attiah, A.~Elezabi, K.~G. Seddik, A.~Gaber, M.~Zaki, and Y.~Gadallah, ``Mobility load management in cellular networks: A deep reinforcement learning approach,'' \emph{IEEE Transactions on Mobile Computing}, vol.~22, no.~3, pp. 1581--1598, 2021.

\bibitem{feriani2022multiobjective}
A.~Feriani, D.~Wu, Y.~T. Xu, J.~Li, S.~Jang, E.~Hossain, X.~Liu, and G.~Dudek, ``Multiobjective load balancing for multiband downlink cellular networks: A meta-reinforcement learning approach,'' \emph{IEEE Journal on Selected Areas in Communications}, vol.~40, no.~9, pp. 2614--2629, 2022.

\bibitem{sun2025proactive}
K.~Sun, Q.~Han, Z.~Yang, W.~Huang, H.~Zhang, and V.~C. Leung, ``Proactive handover type prediction and parameter optimization based on machine learning,'' \emph{IEEE Transactions on Wireless Communications}, 2025.

\bibitem{chang2022decentralized}
H.-H. Chang, H.~Chen, J.~Zhang, and L.~Liu, ``Decentralized deep reinforcement learning meets mobility load balancing,'' \emph{IEEE/ACM Transactions on Networking}, vol.~31, no.~2, pp. 473--484, 2022.

\bibitem{alizadeh2024multi}
A.~Alizadeh, B.~Lim, and M.~Vu, ``Multi-agent q-learning for real-time load balancing user association and handover in mobile networks,'' \emph{IEEE Transactions on Wireless Communications}, vol.~23, no.~8, pp. 9001--9015, 2024.

\bibitem{moutsinas2019probabilistic}
G.~Moutsinas and W.~Guo, ``Probabilistic stability of traffic load balancing on wireless complex networks,'' \emph{IEEE Systems Journal}, vol.~14, no.~2, pp. 2551--2556, 2019.

\bibitem{zou2022analysing}
M.~Zou and W.~Guo, ``Analysing region of attraction of load balancing on complex network,'' \emph{Journal of Complex Networks}, vol.~10, no.~4, p. cnac025, 2022.

\bibitem{zou2025data}
M.~Zou, Y.~Tang, and W.~Guo, ``Data-driven method to ensure cascade stability of traffic load balancing in o-ran based networks,'' \emph{arXiv preprint arXiv:2504.04154}, 2025.

\bibitem{wu2015energy}
J.~Wu, Y.~Zhang, M.~Zukerman, and E.~K.-N. Yung, ``Energy-efficient base-stations sleep-mode techniques in green cellular networks: A survey,'' \emph{IEEE communications surveys \& tutorials}, vol.~17, no.~2, pp. 803--826, 2015.

\bibitem{wang2023base}
X.~Wang, B.~Lyu, C.~Guo, J.~Xu, and M.~Zukerman, ``A base station sleeping strategy in heterogeneous cellular networks based on user traffic prediction,'' \emph{IEEE Transactions on Green Communications and Networking}, vol.~8, no.~1, pp. 134--149, 2023.

\bibitem{zou2024cascade}
M.~Zou and W.~Guo, ``Cascade network stability of synchronized traffic load balancing with heterogeneous energy efficiency policies,'' in \emph{GLOBECOM 2024 - 2024 IEEE Global Communications Conference}, 2024, pp. 1083--1088.

\bibitem{mehlfuhrer2011vienna}
C.~Mehlf{\"u}hrer, J.~Colom~Ikuno, M.~{\v{S}}imko, S.~Schwarz, M.~Wrulich, and M.~Rupp, ``The vienna lte simulators-enabling reproducibility in wireless communications research,'' \emph{EURASIP Journal on Advances in Signal Processing}, vol. 2011, pp. 1--14, 2011.

\bibitem{liu2011asymptotic}
E.~Liu, Q.~Zhang, and K.~K. Leung, ``Asymptotic analysis of proportionally fair scheduling in rayleigh fading,'' \emph{IEEE Transactions on Wireless Communications}, vol.~10, no.~6, pp. 1764--1775, 2011.

\bibitem{varga2011gervsgorin}
R.~S. Varga, \emph{Ger{\v{s}}gorin and his circles}.\hskip 1em plus 0.5em minus 0.4em\relax Springer Science \& Business Media, 2011, vol.~36.

\bibitem{access2018evolved}
E.~U. T.~R. Access, ``Evolved universal terrestrial radio access (e-utra); radio frequency (rf) system scenarios,'' \emph{Specs/html-info/36942. htm}, 2018.

\end{thebibliography}

\end{document}